\algnewcommand\LeftComment[1]{%
$\triangleright$ \eqparbox{COMMENT}{#1} \hfill %
}
\newtheorem{claim2}[theorem]{Claim}
\newcommand{\oldqed}{}
\def\endofFact{$\Diamond$}
\newenvironment{claimproof}{
  \renewcommand{\oldqed}{\qed}
  \renewcommand{\qed}{\endofFact}
  \begin{proof}
  \leftskip15pt\relax
}{
  \hfill\qed\end{proof}
  \renewcommand{\qed}{\oldqed}
} 
\newcommand{\steinercycle}{{\sc Steiner Multicycle}}
\newcommand{\snd}{{\sc Survivable Network Design}}
\newcommand{\tsp}{{\sc Traveling Salesman}}
\newcommand{\opt}{\mathrm{opt}}
\newcommand{\calC}{\mathcal{C}}
\newcommand{\calF}{\mathcal{F}}
\newcommand{\calT}{\mathcal{T}}
\newcommand{\Oh}{\mathrm{O}}
\newcommand{\defi}[1]{%
  \emph{\color{red!50!black}#1}%
}
\begin{document}

\title{Approximations for the Steiner Multicycle Problem}

\author{Cristina G. Fernandes\inst{1}\orcidID{0000-0002-5259-2859} \and
Carla N. Lintzmayer\inst{2}\orcidID{0000-0003-0602-6298} \and
Phablo~F.~S.~Moura\inst{3,4}\orcidID{0000-0002-8176-0874}}
\authorrunning{C. G. Fernandes, C. N. Lintzmayer, P. F. S. Moura}

\institute{Department of Computer Science. University of
S{\~a}o Paulo. Brazil\\
\email{\href{mailto:cris@ime.usp.br}{cris@ime.usp.br}}
\and
Center for Mathematics, Computing and Cognition. Federal
University of ABC. Santo Andr{\'e}, S{\~a}o Paulo, Brazil \\
\email{\href{mailto:carla.negri@ufabc.edu.br}{carla.negri@ufabc.edu.br}}
\and
Computer Science Department. Federal University of Minas
Gerais. Brazil \\
\and
Research Center for Operations Research \& Statistics. KU Leuven. Belgium\\
\email{\href{mailto:phablo.moura@kuleuven.be}{phablo.moura@kuleuven.be}}
}

\maketitle

\begin{abstract}
  The \steinercycle{} problem consists of, given a complete graph, a weight
  function on its vertices, and a collection of pairwise disjoint non-unitary
  sets called terminal sets, finding a minimum weight collection of
  vertex-disjoint cycles in the graph such that, for every terminal set, all of
  its vertices are in a same cycle of the collection.
  This problem generalizes the \textsc{Traveling Salesman} problem and
  therefore is hard to approximate in general.
  On the practical side, it models a collaborative less-than-truckload problem
  with pickup and delivery locations.
  Using an algorithm for the \textsc{Survivable Network Design} problem and
  $T$-joins, we obtain a 3-approximation for the metric case, improving on the
  previous best 4-approximation.
  Furthermore, we present an (11/9)-approximation for the particular case of
  the \steinercycle{} in which each edge weight is~1 or~2.
  This algorithm can be adapted to obtain a (7/6)-approximation when every
  terminal set contains at least~4 vertices.
  Finally, we devise an $\Oh(\lg n)$-approximation algorithm for the asymmetric
  version of the problem.

  \keywords{Combinatorial optimization \and Approximation algorithms \and Steiner
  problems \and Traveling salesman problem \and Collaborative logistics}
\end{abstract}

\section{Introduction}

In the \steinercycle{} problem, one is given a complete graph~$G$, a weight
function $w \colon E(G) \to \mathbb{Q}_+$, and a collection~$\calT\subseteq
\mathcal{P}(V(G))$ of pairwise disjoint non-unitary sets of vertices, called
\defi{terminal sets}.
We say that a cycle~$C$ \defi{respects}~$\calT$ if, for all $T \in \calT$,
either every vertex of~$T$ is in~$C$ or no vertex of~$T$ is in~$C$, and a
set~$\calC$ of vertex-disjoint cycles \defi{respects} $\calT$ if all cycles
in~$\calC$ respect $\calT$ and every vertex in a terminal set is in some cycle
of~$\calC$. 
The cost of such set~$\calC$ is the sum of the edge weights over all cycles
in~$\calC$, a value naturally denoted by $w(\calC)$.
The goal of the \steinercycle{} problem is to find a set of vertex-disjoint
cycles of minimum cost that respects~$\calT$.
We denote by $\opt(G,w,\calT)$ the cost of such a minimum cost set.
Note that the number of cycles in a solution might be smaller than $|\calT|$,
that is, it might be cheaper to join some terminal sets in the same cycle.

We consider that, in a graph~$G$, a cycle is a non-empty connected subgraph
of~$G$ all of whose vertices have degree two.
Consequently, such cycles have at least three vertices.
Here, as a set $T \in \calT$ can have only two vertices, we would like to
consider a single edge as a cycle, of length two, whose cost is twice the
weight of the edge, so that the problem also includes solutions that choose to
connect some set from~$\calT$ with two vertices through such a length-2 cycle.
So, for each set $T \in \calT$ with $|T| = 2$, we duplicate in~$G$ the edge
linking the vertices in~$T$, and allow the solution to contain length-2 cycles.

The \steinercycle\ problem is a generalization of the \tsp{} problem (TSP),
thus it is NP-hard and its general form admits the same inapproximability
results as the TSP. 
It was proposed by Pereira \textit{et al.}~\cite{2018-pereira-etal} as a
generalization of the so-called \textsc{Steiner Cycle} problem (see
Salazar-Gonz{\'a}lez~\cite{Salazar2003}), with the assumption that the graph is
complete and the weight function satisfies the triangle inequality.
We refer to such an instance of the \steinercycle\ problem as \defi{metric},
and to the problem restricted to such instances as the \textsc{Metric}
\steinercycle\ problem.

Pereira \textit{et al.}~\cite{2018-pereira-etal} presented a 4-approximation
algorithm for the \textsc{Metric} \steinercycle\ problem, designed Refinement
Search and GRASP-based heuristics, and proposed an integer linear programming
formulation for the problem.
Lintzmayer \textit{et al.}~\cite{2020-lintzmayer-etal} then considered the
version restricted to the Euclidean plane and presented a randomized
approximation scheme for it, which combines some techniques for the
\textsc{Euclidean TSP}~\cite{1998-arora} and for the \textsc{Euclidean Steiner
Forest}~\cite{2015-borradaile-etal}.

On the practical side, the \steinercycle{} problem models a collaborative
less-than-truckload problem with pickup and delivery locations.
In this scenario, several companies operating in the same geographic regions
must periodically transport products between different locations. 
To reduce the costs of transporting their goods, these companies may
collaborate to create routes for shared cargo vehicles that visit the places
defined by them for the collection and delivery of their products (see Ergun
\textit{et al.}~\cite{Ergun2007a,Ergun2007b}).

This paper addresses three variations of the \steinercycle{}.
The first is the metric case, for which we present a 3-approximation, improving
on the previously best known.
The proposed algorithm uses an approximate solution~$S$ for a derived 
instance of the \snd{}\ problem and a minimum weight $T$-join in~$S$, 
where~$T$ is the set of odd-degree vertices in~$S$.
The second one is the so-called $\{1,2\}$-\steinercycle{} problem, in which the
weight of each edge is either~1 or~2.
Note that this is a particular case of the metric one, and it is a
generalization of the $\{1,2\}$-TSP, therefore it is also
APX-hard~\cite{1993-papadimitriou-yannakakis}.
In some applications, there might be little information on the actual cost of
the connections between points, but there might be at least some distinction
between cheap connections and expensive ones.
These situations could be modeled as instances of the
$\{1,2\}$-\steinercycle{}.
For this variation, we design an $\frac{11}{9}$-approximation following the
strategy for the $\{1,2\}$-TSP proposed by Papadimitrou and
Yannakakis~\cite{1993-papadimitriou-yannakakis}.
The third variation is the asymmetric case, in which one is now given a
complete digraph in which the weight of an arc $(u,v)$ is not necessarily the
same as the weight of the arc $(v,u)$, but the weights still satisfy the
triangle inequality.
For this case, we design an $\Oh(\lg n)$-approximation algorithm, where~$n$ is
the number of vertices in the graph, following some ideas for the Asymmetric
TSP proposed by Frieze, Galbiati, and Maffioli~\cite{Frieze1982}.

Note that the three variations we consider are metric.
In this case, we assume that the terminal sets partition the vertex set. 
Indeed, because the graph (or digraph) is complete and the weight function is
metric, any solution containing non-terminal vertices does not have its cost
increased by shortcutting these vertices (that is, removing them and adding the
edge linking their neighbors in the cycle).
Therefore, the set of cycles of any solution is a \defi{2-factor that respects}
the terminal sets.
A \defi{2-factor} is a set of vertex-disjoint cycles that spans all vertices of
the graph.

A preliminary version of this paper was published in the 
LATIN 2022 proceedings~\cite{2022-fernandes-etal}.
In addition to the results presented there, this manuscript
contains a new algorithm for the asymmetric version of the problem, improved
proofs, more examples, and a detailed discussion on minimum weight
triangle-free 2-factors.

The 3-approximation for the \textsc{Metric} \steinercycle\ is presented in
Section~\ref{sec:metric}, together with a discussion involving the previous
4-approximation and the use of perfect matchings on the set of odd degree
vertices of intermediate structures. 
The $\{1,2\}$-\steinercycle{} problem is addressed in Section~\ref{sec:onetwo}.
The asymmetric case is investigated in Section~\ref{sec:asymmetric}, 
and we make some final considerations in Section~\ref{sec:final}.

\section{Metric Steiner Multicycle problem}
\label{sec:metric}

An instance for the \steinercycle{} is also an instance for the well-known
\textsc{Steiner Forest} problem~\cite[Chapter~22]{Vazirani2002}, but the goal
in the latter is to find a minimum weight forest in the graph that connects
vertices in the same terminal set, that is, every terminal set is in some
connected component of the forest.
The optimum value of the \textsc{Steiner Forest} is a lower bound on the
optimum for the \steinercycle{}: one can  produce a feasible solution for the
\textsc{Steiner Forest} from an optimal solution for the \steinercycle{} by
throwing away one edge in each cycle without increasing its cost.

The existing 4-approximation~\cite{2018-pereira-etal} for the metric
\steinercycle\ problem is inspired by the famous 2-approximation for the metric
TSP~\cite{RosenkrantzSL77}, and consists in doubling the edges in a Steiner
forest for the terminal sets and shortcutting an Eulerian tour in each of its
components to a cycle.
As there are 2-approximations for the \textsc{Steiner Forest} problem, this
leads to a 4-approximation. 

It is tempting to try to use a perfect matching on the odd-degree vertices of
the approximate Steiner forest solution, as Christofides'
algorithm~\cite{Christofides76} does to achieve a better ratio for the
\textsc{Metric TSP}.  
However, the best upper bound we can prove so far on such a matching is the
weight of the approximate Steiner forest solution, which implies that such a
matching weights at most twice the optimum.
With this bound, we also derive a ratio of at most~4.

Another problem that can be used with this approach is known as the \snd\
problem~\cite[Chapter~23]{Vazirani2002}.
An instance of this problem consists of the following: a graph~$G$, a weight
function $w \colon E(G) \to \mathbb{Q}_+$, and a non-negative integer~$r_{ij}$
for each pair of vertices $i,j$ with $i \neq j$, representing a connectivity
requirement.  
The goal is to find a minimum weight subgraph~$G'$ of~$G$ such that, for every
pair of vertices $i, j \in V(G)$ with $i \neq j$, there are at least $r_{ij}$
edge-disjoint paths between~$i$ and~$j$ in~$G'$. 

From an instance of the \steinercycle{} problem, we can naturally define an
instance of the \snd\ problem:  set $r_{ij} = 2$ for every two vertices $i,j$
in the same terminal set, and set $r_{ij} = 0$ otherwise. 
As all vertices are terminals, all connectivity requirements are defined in
this way.
The optimum value of the \snd\ problem is also a lower bound on the optimum for
the \steinercycle{} problem: indeed an optimal solution for the \steinercycle{}
problem is a feasible solution for the \snd\ problem with the same cost.

There also exists a 2-approximation for the \snd\ problem~\cite{Jain2001}.
By applying the same approach of the 2-approximation for the metric TSP, of
doubling edges and shortcutting, we achieve a ratio of~4 for the metric
\steinercycle{} again.
However, next, we will show that one can obtain a 3-approximation for the metric
\steinercycle{} problem, from a 2-approximate solution for the \snd\ problem,
using not a perfect matching on the odd degree vertices of such solution, but
the related concept of $T$-joins.

\subsection{A 3-approximation algorithm for the metric case}

Let~$T$ be a set of vertices of even size in a graph~$G$. 
A set~$J$ of edges in~$G$ is a \defi{$T$-join} if the collection of vertices
of~$G$ that are incident to an odd number of edges in~$J$ is exactly~$T$. 
Any perfect matching on the vertices of~$T$ is a $T$-join, so $T$-joins are, in
some sense, a generalization of perfect matching on a set~$T$. 
It is known that a $T$-join exists in~$G$ if and only if the number of vertices
from~$T$ in each component of~$G$ is even. 
Moreover, there are polynomial-time algorithms that, given a connected
graph~$G$, a weight function $w \colon E(G) \to \mathbb{Q}_+$, and an even
set~$T$ of vertices of~$G$, find a minimum weight $T$-join in~$G$.
For these and more results on $T$-joins, we refer the reader to the book by
Schrijver~\cite[Chapter~29]{Schrijver2003}.

The idea of our 3-approximation is similar to
Christofides~\cite{Christofides76}. 
It is presented in Algorithm~\ref{alg:metric:3approx}.
Let $(G,w,\calT)$ be a metric instance of the \steinercycle\ problem.
The first step is to build the corresponding \snd\ problem instance and to
obtain a 2-approximate solution~$G'$ for this instance. 
The procedure \textsc{2ApproxSND} represents the algorithm by Jain~\cite{Jain2001}
for the \snd.
The second step considers the set~$T$ of the vertices in~$G'$ of odd degree and
finds a minimum weight $T$-join~$J$ in~$G'$. 
The procedure \textsc{MinimumTJoin} represents the algorithm by Edmonds and
Johnson~\cite{EdmondsJ1973} for this task. 
Finally, the Eulerian graph $H$ obtained from~$G'$ by doubling the edges in~$J$
is built and, by shortcutting an Eulerian tour for each component of~$H$, one
obtains a 2-factor~$\calC$ in~$G$ that is the output of the algorithm.
The procedure \textsc{Shortcut} represents this part in
Algorithm~\ref{alg:metric:3approx}.

\begin{algorithm}
  \begin{algorithmic}[1]
    \Require{a complete graph $G$, a weight function $w \colon E(G) \to
    \mathbb{Q}_+$ satisfying the triangle inequality, and a partition $\calT =
    \{T_1,\ldots,T_k\}$ of $V(G)$}
    \Ensure{a 2-factor~$\calC$ in $G$ that respects~$\calT$}
    
    \State $r_{ij} \gets 2$ for every $i,j \in T_a$ for some $1 \leq a \leq k$ 
    \State $r_{ij} \gets 0$ for every $i \in T_a$ and $j \in T_b$ for $1 \leq a < b \leq k$
    \State $G' \gets$ \Call{2ApproxSND}{$G$, $w$, $r$}
    \State Let $T$ be the set of odd-degree vertices in $G'$
    \State Let $w'$ be the restriction of $w$ to the edges in $G'$
    \State $J \gets$ \Call{MinimumTJoin}{$G'$, $w'$, $T$} \label{line:metric:3approx:Tjoin}
    \State $H \gets G'+J$
    \State $\calC \gets$ \Call{Shortcut}{$H$}
    \State \Return $\calC$
  \end{algorithmic}
  \caption{\textsc{SteinerMulticycleApprox\_Metric}($G$, $w$, $\calT$)}
  \label{alg:metric:3approx}
\end{algorithm}

Because the number of vertices of odd-degree in any connected graph is even, 
the number of vertices with odd degree in each component of~$G'$ is even.
Therefore there is a $T$-join in~$G'$.
Moreover, the collection~$\calC$ produced by Algorithm~\ref{alg:metric:3approx} is
indeed a feasible solution for the \steinercycle.

Next, we prove that the proposed algorithm is a 3-approximation. 

\begin{theorem}
\label{thm:metric:3approx}
  Algorithm~\ref{alg:metric:3approx} is a 3-approximation for the
  \textsc{Metric} \steinercycle{} problem.
\end{theorem}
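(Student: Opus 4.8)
The plan is to bound the cost of the output by $w(H)=w(G')+w(J)$ and then to compare each summand with $\opt(G,w,\calT)$. Since feasibility of~$\calC$ is already established, I would first record the two inequalities that the excerpt hands me for free: every optimal \steinercycle{} solution is feasible for the derived \snd{} instance, so its optimum $\opt_{\mathrm{SND}}$ satisfies $\opt_{\mathrm{SND}}\le \opt(G,w,\calT)$; and since \textsc{2ApproxSND} is a $2$-approximation, $w(G')\le 2\,\opt_{\mathrm{SND}}\le 2\,\opt(G,w,\calT)$. Because shortcutting an Eulerian tour of each component of~$H$ never increases the weight (triangle inequality), $w(\calC)\le w(H)=w(G')+w(J)$. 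Hence it suffices to prove the single inequality $w(J)\le \tfrac12\,w(G')$, which immediately yields $w(\calC)\le \tfrac32\,w(G')\le 3\,\opt(G,w,\calT)$. This is the exact analogue of Christofides' scheme, with the $T$-join bound $\tfrac12 w(G')$ playing the role of his half-of-the-optimum matching bound.

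The heart of the argument is therefore the bound $w(J)\le \tfrac12\,w(G')$ on the minimum weight $T$-join in~$G'$, and I would derive it from the fractional point $x^\ast\equiv\tfrac12$ that puts weight $\tfrac12$ on every edge of~$G'$. By the theorem of Edmonds and Johnson, the minimum weight of a $T$-join in~$G'$ equals the minimum of $w'\cdot x$ over the $T$-join dominant, described by $x\ge 0$ and $x(\delta(S))\ge 1$ for every vertex set~$S$ with $|S\cap T|$ odd. I would then check feasibility of~$x^\ast$: for a $T$-odd cut one has $x^\ast(\delta(S))=\tfrac12\,|\delta(S)|$, so the constraint holds as soon as every $T$-odd cut has at least two edges. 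Such a cut is necessarily nonempty, since otherwise~$S$ would be a union of components, each with an even number of odd-degree vertices, forcing $|S\cap T|$ to be even. Thus, provided~$G'$ has no bridge, every $T$-odd cut has $|\delta(S)|\ge 2$, $x^\ast$ is feasible, and evaluating $w'\cdot x^\ast=\tfrac12\,w(G')$ gives the claim.

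This reduces everything to the enabling claim that we may assume $G'$ is bridgeless, which I expect to be the main obstacle to phrase cleanly. The key observation is that a bridge~$e$ of any feasible \snd{} solution is superfluous: no requirement pair can be separated by~$e$, because two vertices of a common terminal set demand two edge-disjoint paths, which a single-edge cut cannot supply; hence deleting~$e$ preserves feasibility and does not increase the weight, while leaving every terminal set $2$-edge-connected and in particular connected, so the final cycles still respect~$\calT$. Deleting all bridges therefore makes $G'$ bridgeless without harming either the $2$-approximation guarantee or the structure needed by \textsc{Shortcut}, and the feasibility check for~$x^\ast$ goes through. Assembling the pieces gives $w(\calC)\le w(G')+w(J)\le \tfrac32\,w(G')\le 3\,\opt(G,w,\calT)$, which is the desired ratio; the one point demanding care in the write-up is justifying the bridge-removal step and the parity argument that every $T$-odd cut is nonempty.
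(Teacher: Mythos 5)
Your proposal is correct and follows essentially the same route as the paper's own proof: reduce the theorem to the single bound $w(J)\le\tfrac12\,w(G')$, justify deleting bridges from~$G'$ by observing that a single-edge cut cannot supply the two edge-disjoint paths required between terminals, and then certify the bound by feeding the all-$\tfrac12$ vector into the Edmonds--Johnson polyhedral description of $T$-joins. The only cosmetic difference is in checking $|\delta(S)|\ge 2$ for $T$-odd cuts: you argue nonemptiness by parity and then invoke bridgelessness, whereas the paper argues that a $T$-odd set must split some 2-edge-connected component of~$G'$ -- the same fact in different words.
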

\begin{proof}
  First, it suffices to prove that $w(J) \leq \frac12w(G')$.
  Indeed, because~$G'$ is a 2-approximate solution for the \snd\ problem, and
  the optimum for this problem is a lower bound on $\opt(G,w,\calT)$, we have
  that $w(G') \leq 2\,\opt(G,w,\calT)$.
  Hence we deduce that $w(J) \leq \opt(G,w,\calT)$, and therefore that
  $w(\calC) \leq w(G') + w(J) \leq 3\,\opt(G,w,\calT)$.
  We now show that inequality $w(J) \leq \frac12w(G')$ holds. 

  A \defi{bridge} is an edge~$uv$ in a graph whose removal leaves~$u$ and~$v$
  in different components of the resulting graph.  
  First, observe that we can delete from~$G'$ any bridges and the remaining
  graph, which we still call~$G'$, remains a solution for the \snd\ problem
  instance.
  Indeed a bridge is not enough to assure the connectivity requirement between
  two vertices in the same terminal set, so it will not separate any such pair
  of vertices, and hence it can be removed.
  In other words, we may assume that each component of~$G'$ is
  2-edge-connected. 

  Edmonds and Johnson~\cite{EdmondsJ1973} gave an exact description of a
  polyhedra related to $T$-joins.
  This description will help us to prove the claim.
  For a set~$S$ of edges in a graph $(V,E)$, let~$v(S)$ denote the
  corresponding $|E|$-dimensional incidence vector (with~1 in the $i$-th
  coordinate if edge~$i$ lies in~$S$ and 0 otherwise). 
  For a set~$X$ of vertices, let $\delta(X)$ denote the set of edges with one
  endpoint in~$X$ and the other in~$V \setminus X$. 
  An \defi{upper $T$-join} is any superset of a $T$-join.
  Let~$P(G,T)$ be the convex hull of all vectors $v(J)$ corresponding to the
  incidence vector of upper $T$-joins~$J$ of a graph~$G=(V,E)$.
  The set~$P(G,T)$ is called the \defi{up-polyhedra of $T$-joins}, and it is
  described by
  \begin{eqnarray}
   \sum_{e \in \delta(W)} x(e) \geq 1 & & 
     \mbox{for every $W\subseteq V$ such that $|W \cap T|$ is odd}, \label{polyhedra:eq1} \\
      0 \leq x(e) \leq 1 & & \mbox{for every edge $e \in E$. \label{polyhedra:eq2}}
  \end{eqnarray}
  (For more on this, see~\cite[Chapter~29]{Schrijver2003}.)

  So, as observed in~\cite{BansalBT2009}, any feasible solution~$x$ to the
  system of inequalities above can be written as a convex combination of upper
  $T$-joins, that is, $x = \sum \alpha_i\,v(J_i)$, where $0 \leq \alpha_i \leq
  1$ and $\sum_i \alpha_i = 1$, leading to the following. 

  \begin{corollary}[Corollary~1 in~\cite{BansalBT2009}]
  \label{cor:dosoutros}
    If all the weights $w(e)$ are non-negative, then, given any feasible
    assignment $x(e)$ satisfying the inequalities above, there exists a
    $T$-join with weight at most $\sum_{e \in E} w(e)x(e)$.
  \end{corollary}

  Recall that, for each component~$C$ of $G'$, $|V(C) \cap T|$ is even.
  Hence, for every~$W \subseteq V(G')$ such that $|W \cap T|$ is odd, there must
  exist a component $C$ of~$G'$ with~$V(C) \cap W \neq \emptyset$, and
  $V(C) \setminus W \neq \emptyset$.
  As a consequence, it holds that~$|\delta(W)|\ge 2$ because every component
  of~$G'$ is 2-edge-connected.
  Consider now the $|E(G')|$-dimensional vector~$\bar x$ which assigns value
  $1/2$ to each edge of~$G'$.
  From the discussion above, it is clear that~$\bar x$ satisfies
  inequalities~\eqref{polyhedra:eq1} and~\eqref{polyhedra:eq2} for~$G'$
  and~$T$.
  Then Corollary~\ref{cor:dosoutros} guarantees that there is a $T$-join~$J$
  in~$G'$ such that $w(J) \leq \frac12\,w(G')$. 
  This completes the proof of the theorem.
\hfill$\qed$
\end{proof}

\subsection{Matchings, $T$-joins, and Steiner forests}

Because~$G$ is complete and~$w$ is metric, the proof of
Theorem~\ref{thm:metric:3approx} in fact implies that a minimum weight perfect
matching in the graph $G[T]$ weights at most $w(G')/2$, and therefore at most
$\opt(G,w,\calT)$. 
However, we have no direct proof for this fact; only this argument that goes
through a minimum weight $T$-join. 
But this fact means that one can exchange line~\ref{line:metric:3approx:Tjoin}
to compute, instead, a minimum weight perfect matching~$J$ in $G[T]$.

We investigated the possibility that one could achieve a ratio of~3 using a
Steiner forest instead of a survivable network design solution.
However, using a $T$-join does not work so well with the Steiner forest, once
its components are not 2-edge-connected.  
Indeed, if~$T$ is the set of odd-degree vertices in a Steiner forest~$F$, a
bound as in the proof of Theorem~\ref{thm:metric:3approx} on a minimum weight
$T$-join in~$F$ would not hold in general: there are examples for which such a
$T$-join in~$F$ has weight $w(F)$.  

\newcommand{\SND}{\mathrm{SND}}
\newcommand{\SF}{\mathrm{SF}}
\newcommand{\SMC}{\mathrm{SMC}}

In this paragraph, let $\opt_{\SND}$ denote the optimum value for the \snd\
instance used in Algorithm~\ref{alg:metric:3approx}, and $\opt_{\SF}$ denote the
optimum value for the \textsc{Steiner Forest} instance used in the
4-approximation from the literature~\cite{2018-pereira-etal}. 
Let $\opt_{\SMC}$ be the \steinercycle\ optimum value. 
Note that $\opt_{\SF} \leq \opt_{\SND} \leq \opt_{\SMC} \leq 2\,\opt_{\SF}$,
where the last inequality holds because a duplicated Steiner forest solution
leads to a cheaper feasible solution for the \snd\ and the \steinercycle\
instances.
Let~$G'$ and~$J$ be the subgraph and the $T$-join used in
Algorithm~\ref{alg:metric:3approx}, respectively, and let~$M$ be a minimum weight
perfect matching in $G[T]$. 
Then $w(M) \leq w(J) \leq \frac12w(G') \leq \opt_{\SND} \leq w(G')$.
(For the first inequality, recall that $J$ is a $T$-join in $G'$ while $M$ is a
minimum weight perfect matching in $G[T]$.)
If~$T'$ is the set of odd-degree vertices in an optimal Steiner forest and~$M'$
is a minimum weight perfect matching in~$G[T']$, then $w(M') \leq 2\opt_{\SF}$,
and there are instances for which this upper bound is tight. 
So, as far as we know, there might be an instance where $w(M') > \opt_{\SMC}$. 
Even if this is not the case, in fact, what we can compute in polynomial time
is a minimum weight perfect matching~$M''$ for the set of odd-degree vertices
in a 2-approximate Steiner forest solution, so it would still be possible that
$w(M'') > \opt_{\SMC}$ for some instances.  
We tried to find an instance where this is the case, but we have not succeeded
so far.

\section{$\{1,2\}$-Steiner Multicycle problem}
\label{sec:onetwo}

In this section, we will address the particular case of the metric
\steinercycle\ problem that allows only edge weights~1 or~2. 

It is a well-known result that there exists a polynomial-time algorithm for
finding a 2-factor of minimum weight in weighted
graphs~\cite{LovaszP1986,Tutte1954}.
Specifically, for a complete graph on~$n$ vertices, one can find such a
2-factor by finding a maximum weight perfect matching in a graph with
$\Oh(n^2)$ vertices and edges.
This can be done in time $\Oh(n^4)$ using Orlin's maximum flow
algorithm~\cite{Orlin2013}.

The algorithm for this case of the \steinercycle\ problem starts from a minimum
weight 2-factor of the given weighted graph, and then repeatedly joins two
cycles until a feasible solution is obtained.
The key to guaranteeing a good approximation ratio is a clever choice of the
cycles to join at each step.
To proceed with the details, we need the following definitions.

Let $(G,w,\calT)$ be an instance of the \steinercycle\ problem with $w \colon
E(G) \to \{1,2\}$.
Recall that $\bigcup_{T \in \calT} T = V(G)$, and that, for each set $T \in
\calT$ with $|T| = 2$, we duplicated in~$G$ the edge linking the vertices
in~$T$, to allow the solution to contain length-2 cycles. 
We say an edge~$e \in E(G)$ is an \defi{$i$-edge} if $w(e) = i$, for $i \in
\{1,2\}$.
A cycle containing only 1-edges is called \defi{pure}; otherwise, it is called
\defi{nonpure}.

All steps of the procedure are summarized in Algorithm~\ref{alg:onetwo:119approx}. 
In what follows, we explain some auxiliary procedures used in the algorithm. 

\begin{algorithm}
  \begin{algorithmic}[1]
    \Require{a complete graph $G$, a weight function $w \colon E(G) \to
    \{1,2\}$, and a partition $\calT = \{T_1,\ldots,T_k\}$ of $V(G)$}
    \Ensure{a 2-factor~$\calC$ in~$G$ that respects~$\calT$}
    
    \State $\calF \gets$ \Call{Special2Factor}{$G$, $w$}
    \State $B \gets $ \Call{BuildBipartiteGraph}{$G$,$w$,$\calT$,$\calF$}\label{line:onetwo:119approx:Bgraph}
    \State $M \gets$ \Call{MaximumMatching}{$B$}\label{line:onetwo:119approx:maxmatching}
    \State Let~$D$ be a digraph such that $V(D) = \calF$ and there is an arc
    $(C,C') \in E(D)$ if~$C$ is matched by~$M$ to a vertex of~$C'$
    \State $D' \gets$ \Call{SpecialSpanningGraph}{$D$}
    \State $\calC' \gets$ \Call{JoinComponentCycles}{$\calF$, $D'$} (see Section~\ref{sub:sec:onetwo:joiningcycles})
    \State $\calC \gets$ \Call{JoinDisrespectingCycles}{$\calC'$, $D'$, $\calT$} (see Section~\ref{sub:sec:onetwo:joiningcycles})
    \State \Return $\calC$
  \end{algorithmic}
  \caption{\textsc{SteinerMulticycleApprox\_12Weights}($G$, $w$, $\calT$)}
  \label{alg:onetwo:119approx}
\end{algorithm}

Procedure \Call{Special2Factor}{} finds a minimum weight
2-factor~$\calF$ of $(G,w)$ with the two following properties:
\begin{enumerate}[(i)]
    \item\label{onetwo:2-factor:p1} $\calF$ contains at most one nonpure cycle; and 
    \item\label{onetwo:2-factor:p2} if $\calF$ contains a nonpure cycle, no
    1-edge in~$G$ connects an endpoint of a 2-edge in the nonpure cycle to a
    pure cycle in~$\calF$.
\end{enumerate}
Given any minimum weight 2-factor~$\calF'$, one can construct in polynomial
time a 2-factor~$\calF$ from~$\calF'$ having
properties~\eqref{onetwo:2-factor:p1} and~\eqref{onetwo:2-factor:p2} as
follows.
To ensure property~\eqref{onetwo:2-factor:p1}, recall that the graph is
complete, so we repeatedly join two nonpure cycles by removing one 2-edge from
each and adding two appropriate edges that turn them into one cycle.  
This clearly does not increase the weight of the 2-factor and reduces the
number of cycles.
To ensure property~\eqref{onetwo:2-factor:p2}, while there is a 1-edge~$yz$
in~$G$ connecting a 2-edge $xy$ of the nonpure cycle to a 1-edge $wz$ of a pure
cycle, we remove~$xy$ and~$wz$ and add~$yz$ and~$xw$, reducing the number of
cycles without increasing the weight of the 2-factor.
The resulting 2-factor is returned by \Call{Special2Factor}{}.

In order to modify~$\calF$ into a 2-factor that respects~$\calT$,
without increasing too much its weight, Algorithm~\ref{alg:onetwo:119approx} builds
some auxiliary structures that capture how the cycles in~$\calF$ attach to
each other.

The second step of Algorithm~\ref{alg:onetwo:119approx} is to build a bipartite
graph~$B$ (line~\ref{line:onetwo:119approx:Bgraph}) as follows.
Let $V(B) = V(G) \cup \{C \in \calF \colon C \text{ is a pure cycle}\}$ and there
is an edge $vC$ in $E(B)$ if~(i) $v \not\in V(C)$ and~$C$ does not respect
$\calT$, and (ii) there is a vertex $u \in V(C)$ such that $uv$ is a 1-edge.
Note that the only length-2 cycles in~$G$, and thus in~$\calF$, are those 
connecting a terminal set of size~$2$.
So such cycles respect~$\calT$ and, hence, if they are in~$B$ (that is, 
if they are pure), they are isolated vertices in~$B$. 
Procedure \Call{MaximumMatching}{} in
line~\ref{line:onetwo:119approx:maxmatching} computes in polynomial time a
maximum matching~$M$ in~$B$ (e.g., using Edmonds'
algorithm~\cite{Edmonds1965}).

Algorithm~\ref{alg:onetwo:119approx} then proceeds by building a digraph~$D$
where $V(D) = \calF$ and there is an arc $(C,C') \in E(D)$ if~$C$ is matched
by~$M$ to a vertex of~$C'$.
Note that the vertices of~$D$ have outdegree~$0$ or~$1$, and the cycles in~$B$
unmatched by~$M$ have outdegree~$0$ in~$D$.
In particular, all pure length-2 cycles in $F$ have outdegree~$0$ in~$D$,
because they are isolated in~$B$, and therefore unmatched.
If there is a nonpure cycle in~$\calF$, it also has outdegree~$0$ in~$D$.
Therefore, any length-2 cycle in~$\calF$, pure or nonpure, has outdegree~0
in~$D$.
However, these vertices with outdegree~0 in~$D$ might have an indegree
different from~0.
Next, Algorithm~\ref{alg:onetwo:119approx} applies procedure
\Call{SpecialSpanningGraph}{$D$} to find a spanning digraph~$D'$ of~$D$ whose
components are in-trees of depth~1, length-2 paths, or trivial components that
correspond to isolated vertices of~$D$.
This takes linear time and consists of a procedure described by
Papadimitrou and Yannakakis~\cite{1993-papadimitriou-yannakakis}, applied to
each nontrivial component of~$D$.
See Figure~\ref{fig:onetwo:119approx} for an example of these constructions.

\begin{figure}[ht!]
    \centering

    \begin{subfigure}{\textwidth}
        \centering
        \includegraphics[width=0.8\textwidth]{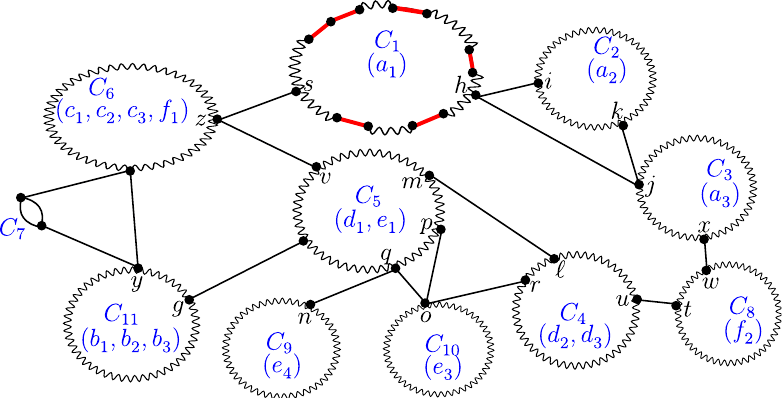}
        \caption{Original graph~$G$ and the 2-factor (depicting only 1-edges in
        black and straight lines and some 2-edges in bold and red lines;
        squiggly lines correspond to one or more 1-edges).
        Inside each $C_i$, in parenthesis, we list some of the terminal
        vertices it contains.}
        \label{fig:onetwo:119approx:twofactor}
    \end{subfigure}

    \vspace{3mm}

    \begin{subfigure}{0.5\textwidth}
        \centering
        \includegraphics[width=\textwidth]{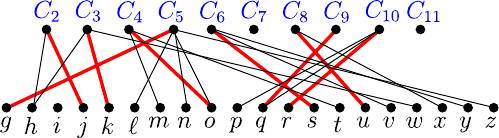}
        \caption{Bipartite graph~$B$ and a matching~$M$ highlighted in red and
        bold.
        Note that there is no edge incident to~$C_{11}$ because it already
        respects~$\calT$.}
        \label{fig:onetwo:119approx:matching}
    \end{subfigure}
    \hfill
    \begin{subfigure}{0.45\textwidth}
        \centering
        \includegraphics[width=0.65\textwidth]{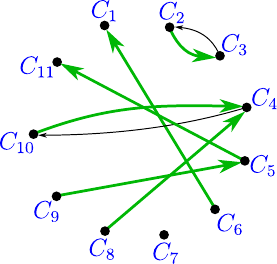}
        \caption{Digraph~$D$ and corresponding subgraph~$D'$ highlighted in
        green and bold.}
        \label{fig:onetwo:119approx:digraph}
    \end{subfigure}

    \caption{Auxiliar graphs and structures built by Algorithm~\ref{alg:onetwo:119approx}.}
    \label{fig:onetwo:119approx}
\end{figure}

At last, Algorithm~\ref{alg:onetwo:119approx} joins some cycles of~$\calF$ in
order to obtain a 2-factor that respects~$\calT$.
This will happen in two phases.
In the first phase, we join cycles that belong to the same component of~$D'$.
In the second (and last) phase, we repeatedly join cycles if they have vertices
from the same set in~$\calT$, to obtain a feasible solution to the problem.
This final step prioritizes joining cycles that have at least one 2-edge.

Details of these two phases, done by procedures \Call{JoinComponentCycles}{}
and \Call{JoinDisrespectingCycles}{}, as well as the analysis of the cost of
joining cycles, are given in Section~\ref{sub:sec:onetwo:joiningcycles}.
For now, observe that all cycles at the end of this process respect~$\calT$.
Also, note that length-2 cycles exist in the final solution only if they
initially existed in~$\calF$ and connected terminals of some set~$T \in \calT$
with~$|T|=2$.
The analysis of the approximation ratio of the algorithm is discussed 
in Section~\ref{sub:sec:onetwo:119approx}.

\subsection{Joining cycles}
\label{sub:sec:onetwo:joiningcycles}

In the first phase, we join cycles in~$\calF$ if they belong to the same
component of~$D'$, which can be either an in-tree of depth~1 or a length-2
path.

An in-tree of depth~1 of~$D'$ consists of a root~$C$ and some other cycles
$\{C_j\}_{j=1}^t$, with $t \geq 1$. 
Note that each arc $(C_j,C)$ can be associated with a 1-edge from~$G$ such that
no two edges are incident on the same vertex in~$C$, because they came from the
matching~$M$.
Also, note that if the nonpure cycle or a length-2 cycle appears in some
in-tree, it could only be the root~$C$.
Let~$v_j$ be the endpoint in $C$ of the edge associated with arc $(C_j,C)$, for
every $j \in \{1, \ldots, t\}$.
Rename the cycles $\{C_j\}_{j=1}^t$ so that, if we go through the vertices
of~$C$ in order, starting from $v_1$, these vertices appear in the order
$v_1,\ldots,v_t$.
We join all cycles in this in-tree into one single cycle in the following
manner.
For each~$v_i$ in~$C$, if~$v_{i+1}$ is adjacent to~$v_i$ in~$C$, then we
join~$C_i$ and~$C_{i+1}$ with~$C$ as in Figure~\ref{fig:onetwo:join-in-trees1}.
Otherwise, we join~$C$ and~$C_i$ as in Figure~\ref{fig:onetwo:join-in-trees2}.
We shall consider that the new cycle contains at least one 2-edge.

\begin{figure}[ht!]
    \centering

    \begin{subfigure}{\textwidth}
        \centering
        \includegraphics[width=0.8\textwidth]{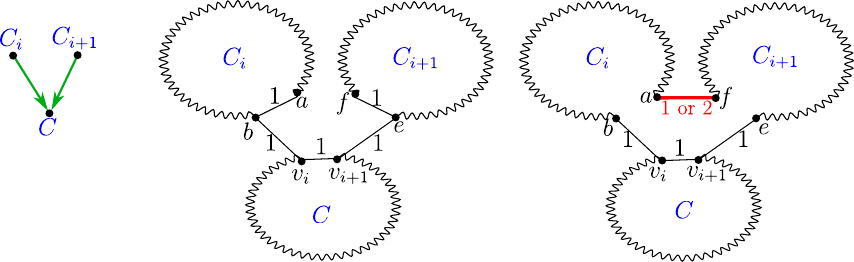}
        \caption{There are adjacent vertices $v_i$ and $v_{i+1}$ in~$C$.}
        \label{fig:onetwo:join-in-trees1}
    \end{subfigure}

    \vspace{3mm}

    \begin{subfigure}{\textwidth}
        \centering
        \includegraphics[width=0.8\textwidth]{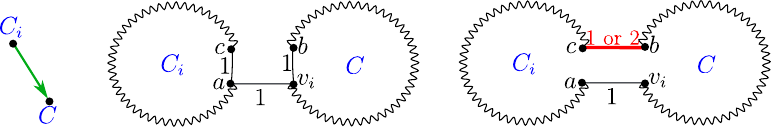}
        \caption{Vertex $v_i$ is not adjacent in~$C$ to another $v_j$.}
        \label{fig:onetwo:join-in-trees2}
    \end{subfigure}

    \caption{Joining cycles that belong to in-trees of~$D'$ into a unique cycle.
    The bold red edges will be considered as 2-edges even if they are 1-edges.}
    \label{fig:onetwo:join-in-trees}
\end{figure}

As for a component of~$D'$ which is a length-2 path, let~$C_i$, $C_j$,
and~$C_k$ be the three cycles that compose it, being~$C_i$ the beginning of the
path and~$C_k$ its end.
Note that if the nonpure cycle appears in some length-2 path, it could only
be~$C_k$.
The arcs $(C_i,C_j)$ and $(C_j,C_k)$ are also associated with 1-edges of~$G$,
but now it may be the case that such edges share their endpoint in~$C_j$.
If that is not the case, then we join these three cycles as shown in
Figure~\ref{fig:onetwo:join-paths1}.
Otherwise, we join the three cycles as shown in Figure~\ref{fig:onetwo:join-paths2}.
We shall also consider that the new cycle contains at least one 2-edge.

\begin{figure}[ht!]
    \centering

    \begin{subfigure}{\textwidth}
        \centering
        \includegraphics[width=\textwidth]{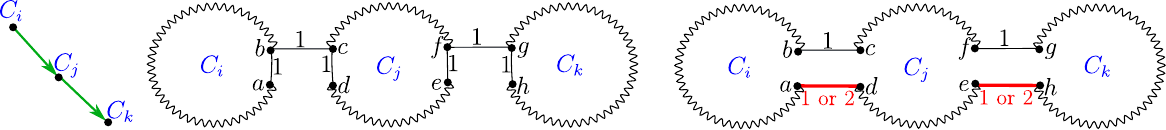}
        \caption{The edges do not share an endpoint.}
        \label{fig:onetwo:join-paths1}
    \end{subfigure}

    \vspace{3mm}

    \begin{subfigure}{\textwidth}
        \centering
        \includegraphics[width=\textwidth]{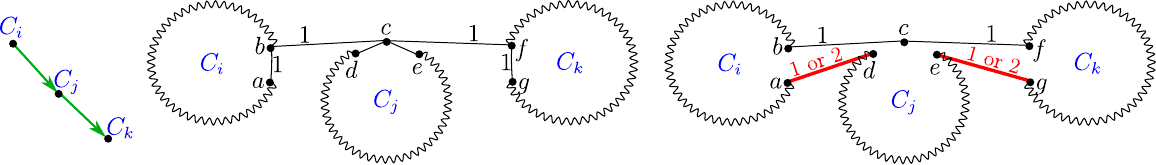}
        \caption{The edges share an endpoint.}
        \label{fig:onetwo:join-paths2}
    \end{subfigure}

    \caption{Joining cycles that belong to length-2 paths of~$D'$ into a unique
    cycle.
    The bold red edges will be considered as 2-edges even if they are 1-edges.}
    \label{fig:onetwo:join-paths}
\end{figure}

Let~$\calC'$ be the resulting 2-factor after the first phase.
This is the output of procedure \Call{JoinComponentCycles}{}. 
It may still be the case that two separated cycles in~$\calC'$ contain
terminals from the same set $T \in \calT$.
So, in the last phase, while there are two such cycles, join them in the
following order of priority: both cycles contain a 2-edge, precisely one of the
cycles contains a 2-edge, and none contains a 2-edge.
The resulting 2-factor of this phase, denoted by~$\calC$, is computed by
\Call{JoinDisrespectingCycles}{} and is the one returned by
Algorithm~\ref{alg:onetwo:119approx}.

\medskip

Now we proceed to analyze the cost increase caused by joining cycles in these
two phases.
Note that $w(\calC)$ is equal to $w(\calF)$ plus some value due to the
increases caused by joining cycles.

For the first phase, we charge the increment of the cost for joining cycles to
some of the vertices in the cycles being joined.
This is done in such a way that each vertex is charged at most once according
to the following.

\begin{claim2}
\label{clm:onetwo:phase1}
  Each vertex not incident to a 2-edge of $\calF$ is charged at most $2/9$
  during the first phase, and no other vertex is charged.
\end{claim2}
\begin{proof}
  Consider an in-tree of depth~1 with root~$C$ and cycles $C_1,\ldots,C_t$ with
  $t \geq 1$.
  When we join cycles~$C_i$ and~$C_{i+1}$ with~$C$, as in
  Figure~\ref{fig:onetwo:join-in-trees1}, note that the increase on the cost is at
  most~1.  
  We charge this cost to the vertices in~$C_i$ and~$C_{i+1}$, which are at
  least~6 (3 per cycle), thus costing at most~1/6 per vertex.  When we only
  join a cycle~$C_i$ with~$C$, as in Figure~\ref{fig:onetwo:join-in-trees2}, the
  increase is also at most~1. 
  We charge this cost to the vertices in~$C_i$ and also to the two vertices
  involved in~$C$.
  Since there are at least~3 vertices in~$C_i$, each of these vertices is
  charged at most~1/5.  
  Note that, indeed, each vertex is charged at most once. 
  Moreover, if~$C$ is the nonpure cycle, then, by
  property~\eqref{onetwo:2-factor:p2}, the edges in~$C$ incident to~$v_i$ and
  to the next vertex in~$C$ must be 1-edges.

  Consider now a length-2 path with vertices~$C_i$, $C_j$, and~$C_k$.
  The cycles $C_i$, $C_j$, and $C_k$ were joined as in
  Figures~\ref{fig:onetwo:join-paths1} and~\ref{fig:onetwo:join-paths2}, so the
  extra cost is at most~2, which is charged to the at least~9 vertices that
  belong to these cycles, giving a cost of at most~2/9 per vertex.
\hfill$\qed$
\end{proof}

As for the last phase, the increase in the cost will be considered for each
pair of cycles being joined.  
If both cycles contain 2-edges, joining them will not increase the cost of the
solution. 
If only one of the cycles contains a 2-edge, then the increase in the cost is
at most~1.
Joining cycles that do not contain 2-edges may increase the cost by~2.

\begin{claim2}
\label{clm:onetwo:phase2}
  The increase in the last phase is at most $c_p$, where $c_p$ is the number of
  pure cycles in~$F$ that do not respect~$\calT$ and are isolated in~$D'$. 
\end{claim2}
\begin{proof}
  In the last phase, note that cycles generated in the first phase will always
  contain a 2-edge.
  Therefore, the only possible increases in cost come from joining one of
  these~$c_p$ cycles.
  The increase is at most~2 if two such cycles are joined and at most~1 if one
  such cycle is joined to some cycle other than these~$c_p$ ones.
  So the increase in this phase is at most~$c_p$.
\hfill$\qed$
\end{proof}

\subsection{Approximation ratio}
\label{sub:sec:onetwo:119approx}

Theorem~\ref{thm:onetwo:119approx} shows how
Algorithm~\ref{alg:onetwo:119approx} guarantees an~11/9 approximation ratio
while Corollary~\ref{cor:onetwo:76approx} shows a case in which
Algorithm~\ref{alg:onetwo:119approx} can be adapted to guarantee a~7/6
approximation ratio.

\begin{theorem}
\label{thm:onetwo:119approx}
  Algorithm~\ref{alg:onetwo:119approx} is an $\frac{11}9$-approximation for the
  $\{1,2\}$-\steinercycle{} problem.
\end{theorem}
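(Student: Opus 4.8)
The plan is to charge the total cost of the output $\calC$ against a lower bound on the optimum coming from the minimum-weight $2$-factor $\calF$. Throughout, let $\opt$ abbreviate $\opt(G,w,\calT)$ and set $n:=|V(G)|$. Since every feasible solution is a $2$-factor respecting $\calT$ and $\calF$ is a minimum-weight $2$-factor, we have $w(\calF)\le\opt$. Because a $2$-factor on $n$ vertices uses exactly $n$ edges (a length-$\ell$ cycle contributes $\ell$ vertices and $\ell$ edges, and a length-$2$ cycle contributes $2$ of each), writing $p$ for the number of $2$-edges of $\calF$ and $p^*$ for the number of $2$-edges of a fixed optimal solution gives $w(\calF)=n+p$ and $\opt=n+p^*$, whence $p\le p^*$ and $\opt\ge n$. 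Finally, $w(\calC)=w(\calF)+I_1+I_2$, where $I_1$ and $I_2$ are the cost increases of the two joining phases, which I would bound through Claims~\ref{clm:onetwo:phase1} and~\ref{clm:onetwo:phase2}.

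The next step is a counting argument. At most $n-p$ vertices fail to be incident to a $2$-edge of $\calF$: the $p$ two-edges supply $2p$ endpoint-incidences and each vertex absorbs at most two, so at least $p$ vertices meet a $2$-edge. By Claim~\ref{clm:onetwo:phase1}, $I_1$ is at most $\tfrac29$ times the number of \emph{charged} vertices, and a charged vertex is neither incident to a $2$-edge nor lies in a component of $D'$ untouched by the first phase; in particular the vertices of the $c_p$ isolated non-respecting pure cycles are never charged. Since a length-$2$ cycle automatically respects $\calT$, each such isolated cycle has at least three vertices, so these cycles supply at least $3c_p$ uncharged non-$2$-edge vertices. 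Hence the number of charged vertices is at most $(n-p)-3c_p$, and combining with Claim~\ref{clm:onetwo:phase2} ($I_2\le c_p$) gives
\[
  I_1+I_2 \;\le\; \tfrac29\bigl((n-p)-3c_p\bigr)+c_p \;=\; \tfrac29(n-p)+\tfrac13 c_p ,
\]
so that $w(\calC)\le\tfrac{11}{9}n+\tfrac79 p+\tfrac13 c_p$.

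The crux — and the step I expect to be the main obstacle — is the lemma $c_p\le p^*$: the number of isolated non-respecting pure cycles is at most the number of $2$-edges in the optimum. The intuition is that a pure cycle $C$ is isolated in $D$ precisely when $C$ is unmatched by the maximum matching $M$ \emph{and} no vertex of $C$ is matched; since $M$ is maximum there is no edge of $B$ joining two unmatched vertices, which forbids any $1$-edge between two such cycles and any $1$-edge from an unmatched vertex into one of them. As $C$ does not respect $\calT$, the optimum must leave $V(C)$ to reach the terminal partners of $C$'s vertices, so $\delta(V(C))$ meets the optimum in at least two edges; I would argue that the matching constraints force these crossings to be realized by $2$-edges and turn this into an injection from the $c_p$ cycles into the $2$-edges of the optimum. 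The delicate point is that $C$ may be joined, in the optimum, to \emph{matched} outside vertices through $1$-edges, so the injection cannot be read off a single cycle's cut; it must be established globally via an exchange/augmenting-path argument against the maximality of $M$, and making this rigorous is the real difficulty.

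Granting $c_p\le p^*$, the proof closes at once: using $p\le p^*$ and $c_p\le p^*$,
\[
  w(\calC)\;\le\;\tfrac{11}{9}n+\tfrac79 p+\tfrac13 c_p\;\le\;\tfrac{11}{9}n+\tfrac79 p^*+\tfrac13 p^*\;=\;\tfrac{11}{9}n+\tfrac{10}{9}p^*\;\le\;\tfrac{11}{9}(n+p^*)\;=\;\tfrac{11}{9}\,\opt ,
\]
which is exactly the claimed approximation ratio.
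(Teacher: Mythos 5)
Your bookkeeping is sound and follows the paper's proof almost line by line: the lower bound $\opt(G,w,\calT)\ge w(\calF)=n+p$, the count of at most $n-p-3c_p$ charged vertices, the use of Claims~\ref{clm:onetwo:phase1} and~\ref{clm:onetwo:phase2} to get $w(\calC)\le\frac{11}{9}n+\frac{7}{9}p+\frac{1}{3}c_p$, and the closing arithmetic (which is even slightly cleaner than the paper's, since bounding $c_p$ by $p^*$ lets you finish with the single identity $\opt(G,w,\calT)=n+p^*$ instead of a convex combination of two lower bounds). The problem is that the one step carrying all the combinatorial content --- your lemma $c_p\le p^*$, which is a weakened form of the paper's Claim~\ref{claim:onetwo:isolatedpure} ($c_p\le\ell$, where $\ell\le |U|=p^*$) --- is stated but never proven. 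The sketch you offer in its place does not work as described: as you yourself concede, maximality of $M$ only forbids edges of $B$ (that is, $1$-edges of $G$ into a non-respecting pure cycle) between an unmatched cycle and an unmatched vertex, so the optimum may well leave an isolated cycle through $1$-edges ending at matched vertices, and no injection into the $2$-edges of the optimum can be read off locally. The promised global ``exchange/augmenting-path'' argument is never exhibited, so the proposal is not a proof.

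The gap can be closed much more simply than by augmenting paths, and this is what the paper does: build a second, explicit matching in $B$ out of the optimal solution and compare cardinalities with $M$. Let $U$ be the set of vertices at which the optimal $2$-factor $\calC^*$ traverses a $2$-edge, so $|U|=p^*$, and let $\ell$ be the number of pure cycles of $\calF$ that meet $U$. Walk along each cycle of $\calC^*$; the first time a vertex $v_{ij}\notin U$ is found lying on a pure non-respecting cycle $C$ of $\calF$ with $v_{i\,j+1}\notin V(C)$, match $C$ to $v_{i\,j+1}$. Because $v_{ij}\notin U$, the edge $v_{ij}v_{i\,j+1}$ is a $1$-edge, so $Cv_{i\,j+1}\in E(B)$; each cycle is matched at most once by the first-time rule, and each vertex at most once because its predecessor on $\calC^*$ is unique. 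Every pure non-respecting cycle has an exit point in $\calC^*$ (it separates some terminal set), so the only cycles this procedure leaves unmatched are those all of whose exit vertices lie in $U$, and there are at most $\ell$ of those. Hence some matching of $B$ leaves at most $\ell$ cycle-side vertices unmatched; since $M$ is a maximum matching, it too leaves at most $\ell$ cycles unmatched, and every cycle counted by $c_p$ is unmatched by $M$ (being isolated in $D'$ implies being isolated in $D$, by the property of the Papadimitriou--Yannakakis spanning subgraph). This yields $c_p\le\ell\le p^*$, which is exactly the lemma you were missing, with no exchange argument required.
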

\begin{proof}
  Let~$(G,w,\calT)$ be an instance of the $\{1,2\}$-\steinercycle\ problem. 
  Let $n = |V(G)|$ and denote by $e_2(X)$ the total amount of 2-edges in a
  collection~$X$ of cycles.

  We start with two lower bounds on $\opt(G,w,\calT)$.
  Let $\calF$ be the 2-factor used in Algorithm~\ref{alg:onetwo:119approx} when
  applied to $(G,w,\calT)$.
  The first one is $w(\calF)$, because any solution for \steinercycle{}
  problem is a 2-factor in $G$.
  Thus 
  \begin{equation}
  \label{eq:onetwo:opt_lb1}
      \opt(G,w,\calT) \geq w(\calF) = n + e_2(\calF) \, .
  \end{equation}
  The other one is related to pure cycles in $\calF$.
  Consider an optimal solution~$\calC^*$ for instance $(G,w,\calT)$. 
  Thus $\opt(G,w,\calT) = n + e_2(\calC^*)$.
  Let $C^*_1, \ldots, C^*_r$ be the cycles of~$\calC^*$, where $C^*_i =
  (v_{i0}, \ldots, v_{i|C^*_i|})$ for each $i \in \{1,\ldots, r\}$, with
  $v_{i0} = v_{i|C^*_i|}$.  
  Let $U = \{v_{ij} \colon i \in \{1, \ldots, r\}, \, j \in \{0, \ldots,
  |C^*_i|-1\} \text{, and } v_{ij}v_{i\,j+1} \mbox{ is a 2-edge}\}$ and note
  that $|U| = e_2(\calC^*)$. 
  Let~$\ell$ be the number of pure cycles in the 2-factor $\calF$ that contain
  vertices in~$U$. 
  Clearly $e_2(\calC^*) \geq \ell$, which gives us
  \begin{equation}
  \label{eq:onetwo:opt_lb2}
      \opt(G,w,\calT) \geq n + \ell \, .
  \end{equation}
  
  Now let~$\calC$ be the 2-factor produced by
  Algorithm~\ref{alg:onetwo:119approx} for input $(G,w,\calT)$.
  Let us show an upper bound on the cost of~$\calC$.
  Solution~$\calC$ has cost~$w(\calF)$ plus the increase in the cost made in
  the first phase, and then in the final phase of joining cycles.
  Let us start bounding the total cost increase in the first phase. 
  Let~$c_p$ be as in Claim~\ref{clm:onetwo:phase2}.
  Recall that these~$c_p$ cycles are not matched by~$M$. 
  Let~$n(c_p)$ be the number of vertices in these~$c_p$ cycles, and note
  that~$n(c_p) \geq 3c_p$, because each such cycle does not respect $\calT$ and
  hence has at least three vertices.
  By Claim~\ref{clm:onetwo:phase1}, the vertices incident to 2-edges of~$\calF$ are
  never charged. 
  So there are at least~$e_2(\calF)$ vertices of the nonpure cycle of~$\calF$
  not charged during the first phase.
  Thus, at most $n - n(c_p) - e_2(\calF) \leq n - 3c_p - e_2(\calF)$ vertices
  were charged in the first phase.
  Also, by Claim~\ref{clm:onetwo:phase1}, each such vertex was charged at
  most~2/9.

  By Claim~\ref{clm:onetwo:phase2}, the increase in this phase is at most~$c_p$.
  Thus we have
  \begin{align}
      w(\calC) \ & \leq \ w(\calF) + \frac{2}{9}(n - 3c_p - e_2(\calF)) + c_p \nonumber \\
               & = \ n + e_2(\calF) + \frac{2}{9}(n - 3c_p - e_2(\calF)) + c_p \nonumber \\
               & = \ \frac{11}{9}n + \frac{7}{9}e_2(\calF) + \frac{1}{3}c_p \nonumber \\
               & \leq \ \frac{11}{9}n + \frac{7}{9}e_2(\calF) + \frac{1}{3}\ell \label{eq:onetwo:bound2} \\
               & \leq \ \frac{7}{9}(n + e_2(\calF)) + \frac{4}{9}(n + \ell) \nonumber \\
               & \leq \ \frac{7}{9}\,\opt(G,w,\calT) + \frac{4}{9}\,\opt(G,w,\calT) 
                 \ = \ \frac{11}{9}\,\opt(G,w,\calT) \, , \label{eq:onetwo:costalgorithm}
  \end{align}
  where~\eqref{eq:onetwo:bound2} holds by Claim~\ref{claim:onetwo:isolatedpure},
  and~\eqref{eq:onetwo:costalgorithm} holds by~\eqref{eq:onetwo:opt_lb1}
  and~\eqref{eq:onetwo:opt_lb2}.
  It remains to prove the following.

  \begin{claim2}
  \label{claim:onetwo:isolatedpure}
     $c_p \leq \ell$.
  \end{claim2}
  \begin{claimproof}
    Recall that $c_p$ is the number of pure cycles in~$\calF$ that are isolated
    in~$D'$ and do not respect~$\calT$, and observe that $\ell \le |U|$. 

    We will describe a matching in the bipartite graph~$B$ with at most~$\ell$
    unmatched cycles.
    From this, because~$M$ is a maximum matching in~$B$ and there are at
    least~$c_p$ cycles not matched by~$M$, we conclude that $c_p \leq \ell$.

    For each~$i \in \{1, \ldots, r\}$, go through the vertices of~$C^*_i$ from
    $j = 0$ to~$|C^*_i| - 1$ and if, for the first time, we find a vertex
    $v_{ij} \not\in U$ that belongs to a pure cycle~$C$ (which does not respect
    $\calT$) such that $v_{i\,j+1}$ is not in~$C$, we match~$C$ to~$v_{i\,j+1}$
    in~$B$.
    Note that, as $v_{ij} \not\in U$, the edge between~$C$ and $v_{i\,j+1}$ is
    indeed in~$B$.  
    Every pure cycle that does not respect~$\calT$ will be matched by this
    procedure, except for at most~$\ell$.
  \end{claimproof}
\hfill$\qed$
\end{proof}

This analysis is tight.  
Consider the instance depicted in Figure~\ref{fig:onetwo:119tight1}, with~$9$
vertices and $\calT = \{\{a_1,a_2,a_3\}, \{b_1,b_2,c_1,c_2,d_1,d_2\}\}$.  
There is a Hamiltonian cycle in the graph with only 1-edges, so the optimum
costs~$9$. 
However, there is also a 2-factor of cost~$9$ consisting of the three length-3
cycles~$C_1$, $C_2$ and~$C_3$, as in Figure~\ref{fig:onetwo:119tight1}. 
The matching in the graph~$B$ might correspond to the 1-edge between~$C_2$
and~$C_1$, and the 1-edge between~$C_3$ and~$C_2$, as in
Figure~\ref{fig:onetwo:119tight2}.
This leads to a length-2 path in~$D'$, as in Figure~\ref{fig:onetwo:119tight3}.
The process of joining these cycles, as the algorithm does, might lead to an
increase of~2 in the cost, resulting in the solution of cost~$11$ depicted in
Figure~\ref{fig:onetwo:119tight5}, which achieves a ratio of exactly~$11/9$. 
This example can be generalized to have $n=9k$ vertices, for any positive
integer~$k$.

\begin{figure}[ht!]
    \centering

    \begin{subfigure}[t]{0.47\textwidth}
        \centering
        \includegraphics[width=0.65\textwidth]{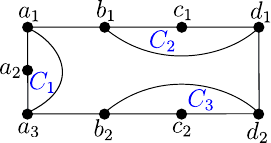}
        \caption{Initial graph~$G$ and 2-factor~$F$. All depicted edges are
        1-edges while the missing ones are 2-edges.}
        \label{fig:onetwo:119tight1}
    \end{subfigure}
    \hfill
    \begin{subfigure}[t]{0.47\textwidth}
        \centering
        \includegraphics[width=0.75\textwidth]{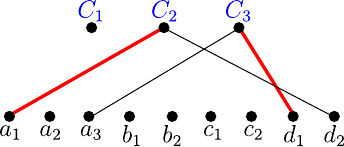}
        \caption{Bipartite graph built from~$F$ and matching~$M$ highlighted.}
        \label{fig:onetwo:119tight2}
    \end{subfigure}
    \medskip

    \begin{subfigure}[t]{0.22\textwidth}
        \centering
        \includegraphics[width=0.5\textwidth]{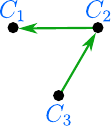}
        \caption{Digraph~$D$ which coincides with~$D'$.}
        \label{fig:onetwo:119tight3}
    \end{subfigure}
    \hfill
    \begin{subfigure}[t]{0.36\textwidth}
        \centering
        \includegraphics[width=\textwidth]{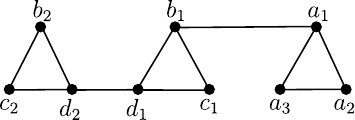}
        \caption{Joining cycles of the length-2 path in~$D'$.}
        \label{fig:onetwo:119tight4}
    \end{subfigure}
    \hfill
    \begin{subfigure}[t]{0.36\textwidth}
        \centering
        \includegraphics[width=\textwidth]{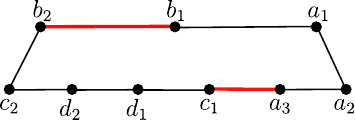}
        \caption{Final solution $\calC$, with two 2-edges, and cost 11.}
        \label{fig:onetwo:119tight5}
    \end{subfigure}

    \caption{Tight example for Algorithm~\ref{alg:onetwo:119approx}.}
    \label{fig:onetwo:119tight}
\end{figure}

Similarly to what Papadimitrou and
Yannakakis~\cite{1993-papadimitriou-yannakakis} achieve for the $\{1,2\}$-TSP,
we also derive the following.

\begin{corollary}
\label{cor:onetwo:76approx}
  Algorithm~\ref{alg:onetwo:119approx} is a $\frac76$-approximation for the
  $\{1,2\}$-\steinercycle{} problem when $|T| \geq 4$ for all $T \in \calT$.
\end{corollary}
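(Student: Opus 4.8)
The plan is to adapt Algorithm~\ref{alg:onetwo:119approx} by replacing the call to \textsc{Special2Factor} with a routine that returns a minimum weight \emph{triangle-free} 2-factor~$\calF$ (one with no cycle of length~3), still enforcing the analogues of properties~\eqref{onetwo:2-factor:p1} and~\eqref{onetwo:2-factor:p2}, and then to rerun the analysis of Theorem~\ref{thm:onetwo:119approx} under the stronger guarantee that every cycle of~$\calF$ has at least four vertices. First I would argue that this setup is legitimate. Since~$\calT$ partitions~$V(G)$ and $|T| \ge 4$ for every $T \in \calT$, each cycle of a feasible solution contains an entire terminal set and hence at least four vertices; in particular, no length-2 cycles occur (there are no duplicated edges), so \emph{every} feasible solution is itself a triangle-free 2-factor. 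Consequently a minimum weight triangle-free 2-factor~$\calF$ is no heavier than the optimal solution, which keeps lower bound~\eqref{eq:onetwo:opt_lb1} valid in the form $\opt(G,w,\calT) \ge w(\calF) = n + e_2(\calF)$. Lower bound~\eqref{eq:onetwo:opt_lb2} and Claims~\ref{clm:onetwo:phase2} and~\ref{claim:onetwo:isolatedpure} do not depend on cycle lengths and carry over verbatim. Finally, the merging steps that install properties~\eqref{onetwo:2-factor:p1} and~\eqref{onetwo:2-factor:p2} only fuse cycles into strictly longer ones, so they neither create triangles nor raise the weight; hence~$\calF$ stays a minimum weight triangle-free 2-factor with all cycles of length at least four.

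Next I would re-prove Claim~\ref{clm:onetwo:phase1} with the improved constant. With every cycle of~$\calF$ having at least four vertices, joining $C_i$ and $C_{i+1}$ with~$C$ charges cost at most~1 to at least~8 vertices (at most $1/8$ each); joining a single $C_i$ with~$C$ charges cost at most~1 to at least~6 vertices (at most $1/6$ each); and a length-2 path charges cost at most~2 to at least~12 vertices (at most $1/6$ each). Thus each charged vertex now receives at most $1/6$ instead of $2/9$. Moreover, the $c_p$ isolated pure cycles that do not respect~$\calT$ span $n(c_p) \ge 4c_p$ vertices, so at most $n - 4c_p - e_2(\calF)$ vertices are charged in the first phase.

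I would then repeat the cost estimate of Theorem~\ref{thm:onetwo:119approx} with these numbers. Combining the first-phase charge $\tfrac16\bigl(n - 4c_p - e_2(\calF)\bigr)$ with the at most~$c_p$ increase of the last phase and $w(\calF) = n + e_2(\calF)$ yields
\begin{equation*}
  w(\calC) \ \le\ \frac76\,n + \frac56\,e_2(\calF) + \frac13\,c_p \ \le\ \frac76\,n + \frac56\,e_2(\calF) + \frac13\,\ell,
\end{equation*}
using $c_p \le \ell$ from Claim~\ref{claim:onetwo:isolatedpure}. The right-hand side decomposes \emph{exactly} as $\tfrac56\bigl(n + e_2(\calF)\bigr) + \tfrac13\bigl(n + \ell\bigr)$, so by~\eqref{eq:onetwo:opt_lb1} and~\eqref{eq:onetwo:opt_lb2} it is at most $\bigl(\tfrac56 + \tfrac13\bigr)\opt(G,w,\calT) = \tfrac76\,\opt(G,w,\calT)$.

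The main obstacle is the very first step: computing a \emph{minimum weight} triangle-free 2-factor in polynomial time. Merely finding a triangle-free 2-factor is known to be polynomial, but minimizing its weight is delicate in general; here one must exploit that the weights lie in $\{1,2\}$ (so minimizing weight is equivalent to maximizing the number of 1-edges in a triangle-free 2-factor) to obtain a polynomial-time algorithm, which is precisely the content of the discussion on minimum weight triangle-free 2-factors announced in the introduction. Everything downstream is a routine re-tuning of the constants $2/9 \to 1/6$ and $3c_p \to 4c_p$ in the proof of Theorem~\ref{thm:onetwo:119approx}.
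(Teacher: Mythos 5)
Your proposal is correct and follows essentially the same route as the paper: replace \textsc{Special2Factor} with Hartvigsen's polynomial-time algorithm for a minimum-weight triangle-free 2-factor (valid here precisely because the weights are in $\{1,2\}$, as discussed in the paper's appendix), then redo the charging argument of Claim~\ref{clm:onetwo:phase1} with every cycle having at least four vertices so that each vertex is charged at most $1/6$. Your write-up is in fact somewhat more explicit than the paper's, since you verify that lower bound~\eqref{eq:onetwo:opt_lb1} survives (every feasible solution is triangle-free when $|T|\ge 4$ for all $T\in\calT$) and carry out the final computation with $n(c_p)\ge 4c_p$, steps the paper leaves implicit after stating the $1/6$ bound.
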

\begin{proof}
  For weights 1 and 2, there is a polynomial-time algorithm that computes a
  minimum-weight 2-factor that contains no triangle~\cite[Section~3,
  Chapter~3]{Hartvigsen1984}.
  (See Appendix~\ref{appendix} for a discussion on references to this algorithm
  in the literature.)
  Using this algorithm within \Call{Special2Factor}{} in
  Algorithm~\ref{alg:onetwo:119approx}, we can guarantee that there are at least~4
  vertices per cycle in the produced 2-factor~$\calC$.
  The charging argument presented in Claim~\ref{clm:onetwo:phase1} can use the fact
  that the cycles have length at least~$4$, which increases the number of
  vertices to distribute the cost increase. 
  For instance, when we join a cycle~$C_i$ with~$C$, as in
  Figure~\ref{fig:onetwo:join-in-trees2}, the increase is at most~1, and we charge
  this cost to the vertices in~$C_i$ and also to the two vertices involved
  in~$C$.
  Now there are at least~4 vertices in~$C_i$, so each of these vertices is
  charged at most~1/6.  
  The other case in which the charged cost was more than~1/6 was when three
  cycles were joined, as in Figures~\ref{fig:onetwo:join-paths1}
  and~\ref{fig:onetwo:join-paths2}. 
  In this case, the extra cost is at most~2, which is now charged to the at
  least~12 vertices that belong to these cycles, giving a cost of at most~1/6
  per vertex.
  So the value charged per vertex is at most~1/6 in all cases, and the result
  follows.
\hfill$\qed$
\end{proof}

\section{Asymmetric Steiner Multicycle problem}
\label{sec:asymmetric}

In this section, we consider a version of the \steinercycle\ in which the input
graph is a complete digraph~$D$ on~$n$ vertices with arc set~$A(D)$, and the
weight function $w \colon A(D) \to \mathbb{Q}_+$ does not necessarily satisfy
$w(u,v) = w(v,u)$ for all $u,v \in V(D)$ with $u\neq v$.
We shall assume that the arc weights still satisfy the triangular inequality:
$w(a,c) \leq w(a,b)+w(b,c)$ for all distinct $a,b,c \in V(D)$.
As before, we also have a collection~$\calT$ of terminal sets which
partitions~$V(D)$, and the goal now is to find a minimum weight \emph{directed}
2-factor of~$D$ that respects~$\calT$.

We next devise an $\Oh(\lg n)$-approximation algorithm for this problem that is
inspired by the algorithm with the same approximation ratio for the Asymmetric
TSP, proposed by~Frieze, Galbiati, and Maffioli~\cite{Frieze1982}.
At each iteration, their algorithm proceeds as follows.
It starts with an induced subdigraph~$D'$ of~$D$ and what we call a
\emph{strongly Eulerian} spanning subdigraph~$\calC$ of~$D$ (initially $D' = D$
and~$\calC$ has no arcs).
Then, it finds a minimum weight 2-factor~$\calF$ in~$D'$, and makes $\calC =
\calC \cup \calF$.  
If~$\calF$ has only one cycle, then~$\calC$ is connected, and their algorithm
outputs a Hamiltonian cycle obtained from shortcutting~$\calC$ into a cycle. 
If~$\calF$ has more than one cycle, then their algorithm chooses a vertex in
each cycle of~$\calF$, called its \emph{representative}, it lets $D'$ be the
subdigraph of~$D$ induced on these representatives, and it starts the next
iteration with the new~$D'$ and~$\calC$.
The authors observed that each 2-factor~$\calC$ has weight bounded by the
length of the optimal TSP tour, and the number of iterations is bounded by $\lg
n$, because the number of components of~$\calC$ is divided by two in each
iteration. 
This implies the $\Oh(\lg n)$ approximation ratio.

Our algorithm aims at obtaining a 2-factor that respects~$\calT$.
Hence it stops once each terminal set is contained in a component of~$\calC$. 
It also differs from the algorithm due to Frieze, Galbiati, and
Maffioli~\cite{Frieze1982} in the way it chooses the representatives.
At each iteration of our algorithm, one has to guarantee that the
2-factor~$\calF$ has weight bounded by the optimal value, and that the number
of iterations is still~$\Oh(\lg n)$. 
We shall see that this can be done using a minimal edge cover of an auxiliary
graph to find \emph{good} representatives.
Recall that an \defi{edge cover} in a graph is a set~$M$ of edges such that
every vertex is incident to an edge in~$M$.  
A minimal edge cover on a graph with~$n$ vertices can be computed in time
$\Oh(n^{2.5})$ using the algorithm for the maximum matching problem in general
graphs due to Micali and Vazirani~\cite{micali1980v}.

A digraph~$D'$ is said to be \defi{strongly Eulerian} if, for every $v \in
V(D')$, the indegree and outdegree of~$v$ in~$D'$ are each equal to some $k(v)
\in \mathbb{Z}_+$, and~$D'-v$ contains precisely $k(v)-1$ more components
than~$D$.
We say that a component~$K$ of $D'-v$ is \defi{adjacent} to~$v$ if there is a
vertex in~$K$ which is a neighbor of~$v$ in~$D'$.
Analogously to the observation in~\cite{Frieze1982}, one may notice that, for
every $v \in V(D')$ and each connected component~$K$ of $D'-v$ which is
adjacent to~$v$, there exist distinct vertices $u,w \in V(K)$ such that $(u,v)$
and $(v,w)$ belong to~$A(D')$.
Procedure \textsc{DirectedShortcut} shows how to obtain a directed
2-factor~$\calF$ of~$D$ from a strongly Eulerian spanning subdigraph~$D'$
of~$D$ so that~$\calF$ has the same connected components as~$D'$. 
If there is an underlying weight function~$w \colon A(D) \to \mathbb{Q}_+$
satisfying the triangular inequality, then $w(\calF) \leq w(D')$.
For each iteration of the while loop in line~\ref{line:asy:while}, $D'$ is a
strongly Eulerian digraph with the same connected components.
We remark that this algorithm corresponds to the shortcutting procedure
described in~\cite{Frieze1982} applied to every component of~$D'$.
For the sake of completeness, procedure \textsc{DirectedShortcut} is presented
in Algorithm~\ref{alg:asy:dirShortCut}. 
Note that this takes polynomial time. 

\begin{algorithm}
  \begin{algorithmic}[1]
    \Require{a strongly Eulerian digraph $D'$}
    \Ensure{a directed 2-factor with the same connected components as $D'$}
    \While{$D'$ is not a directed 2-factor}\label{line:asy:while}
      \State Let $v \in V(D')$ be such that with $k(v) > 1$
      \State Let $C_1, C_2$ be distinct components of $D'-v$ that are adjacent to $v$
      \State Let $u_i,w_i \in V(C_i)$ such that $(u_i,v),(v,w_i) \in A(D)$ for $i \in \{1,2\}$
      \State $A(D') \gets [A(D') \setminus \{(u_1, v), (v,w_2)\}] \cup \{(u_1, w_2)\}$ \label{line:asy:digraph-shortcut}
    \EndWhile
    \State \Return $D'$
  \end{algorithmic}
  \caption{\textsc{DirectedShortcut}($D'$)}
  \label{alg:asy:dirShortCut}
\end{algorithm}

Let $\eta^{}_\calT(\calC)$ denote the number of cycles in a 2-factor~$\calC$ that do not
respect~$\calT$.
The procedure \textsc{Representatives} takes as input~$\calC$ and~$\calT$, and
it creates an auxiliary undirected graph~$G$ with vertex set being the
$\eta^{}_\calT(\calC)$ cycles in~$\calC$ that do not respect~$\calT$ and edge set 
$\{\{C,C'\} \colon C \neq C', V(C) \cap T \neq \emptyset \text{, and }
{V(C')\cap T \neq \emptyset}$ for some $T \in \calT\}$.
Then, it computes a minimal edge cover~$M$ of~$G$ and, for each edge $\{C,C'\}
\in M$, it chooses a pair of vertices $\{r^{}_C, r^{}_{C'}\}$ such that $r^{}_C
\in V(C) \cap T$ and $r^{}_{C'} \in V(C') \cap T$ where~$T$ is a terminal set
in~$\calT$ that intersects both~$C$ and~$C'$.
The procedure then returns the set of vertices $R = \bigcup_{\{C,C'\} \in M}
\{r^{}_C, r^{}_{C'}\}$.

\begin{algorithm}
  \begin{algorithmic}[1]
    \Require{a directed 2-factor $\calC$ of an induced subdigraph of $D$ and $\calT$}
    \Ensure{a set of vertices $R \subseteq V(D)$}

    \State Let $E = \{\{C,C'\} \colon C \neq C', V(C) \cap T \neq \emptyset \text{, and }
            V(C')\cap T \neq \emptyset \text{ for some } T \in \calT\}$ 
    \State Let $G$ be the graph with vertex set $\{C \in \calC : C \text{ does not respect } \calT\}$ and edge set $E$
    \State $M \gets \textsc{MinimalEdgeCover}(G)$
    \State $R \gets \emptyset$
    \For{each edge $\{C,C'\} \in M$ }
      \State Let $T \in \calT$ such that $T\cap C \neq \emptyset$ and $T\cap C' \neq \emptyset$
      \State Let $r^{}_C \in V(C) \cap T$ and $r^{}_{C'} \in V(C') \cap T$
      \State $R \gets R \cup \{r^{}_C, r^{}_{C'}\}$
    \EndFor
    \State \Return $R$
  \end{algorithmic}
  \caption{\textsc{Representatives}($\calC, \calT$)}
  \label{alg:asy:representatives}
\end{algorithm}

We next argue that Algorithm~\ref{alg:asy:representatives} produces a set~$R$
satisfying
\begin{enumerate}[(i)]
  \item $R \cap V(C) \neq \emptyset$ for every $C \in V(G)$;
  \item $|R \cap T| \neq 1$ for every terminal set $T \in \calT$; and
  \item $|R \cap V(C)| = 1$ for at least $\eta^{}_\calT(\calC)/2$ cycles~$C$ in $\calC$. \label{property:asy:lonely}
\end{enumerate}
The first property holds because~$M$ is an edge cover of~$G$, thus, for
every~$C \in V(G)$, at least one vertex from $V(C)$ was included in~$R$.
The second property follows from the fact that, for each edge in~$M$, two
distinct vertices of the same terminal set were simultaneously included in~$R$.
For a terminal set $T \in \calT$ contained in a cycle $C \in V(G)$, 
we have $|R \cap T| = 0$.
The last property holds because, in every minimal edge cover, at least half of
the vertices are covered exactly once.  
Indeed, every edge of a minimal edge cover is incident to a vertex that is only
covered by this edge, and there are at least $|V(G)|/2 = \eta^{}_\calT(\calC)/2$ 
edges in any edge cover of~$G$. 
Every cycle $C \in \calC$ such that $|R \cap V(C)| = 1$ is said to be
\defi{lonely}.
Note that property~\eqref{property:asy:lonely} guarantees that there are at least
$\eta^{}_\calT(\calC)/2$ lonely cycles.

Algorithm~\ref{alg:asy:lgnapprox} formalizes the steps of our algorithm for the
\textsc{Asymmetric \steinercycle}\ problem.
It uses an auxiliary procedure that computes a minimum weight directed 2-factor
in a weighted digraph.
See Figure~\ref{fig:asy:lgnapprox} for an example.

\begin{algorithm}
  \begin{algorithmic}[1]
    \Require{a complete digraph $D$, a weight function $w \colon A(D)
    \to \mathbb{Q}_+$, and a partition $\calT$ of~$V(D)$}
    \Ensure{a directed 2-factor~$\calC$ in $D$ that respects~$\calT$}
    
    \State $\calC \gets$ \Call{MinimumDirected2Factor}{$D$, $w$}\label{line:asy:C0}
    \While{$\eta^{}_\calT(\calC) > 0$}\label{line:asy:main-loop}
      \State $R \gets$ \Call{Representatives}{$\calC$, $\calT'$} \label{line:asy:R}
      \State Let $D'$ be the complete digraph induced by $R$ on~$D$
      \State Let $w'$ be the restriction of $w$ to $A(D')$
      \State $\calC' \gets$ \Call{MinimumDirected2Factor}{$D'$, $w'$} \label{line:asy:2factor}
      \State Let $D''$ be the digraph induced by $\calC' \cup \calC$ \label{line:asy:Eulerian-digraph}
      \State $\calC \gets $ \Call{DirectedShortcut}{$D''$} \label{line:asy:shortcut}
    \EndWhile
    \State \Return $\calC$
  \end{algorithmic}
  \caption{\textsc{SteinerMulticycleApprox\_Asymmetric}($G$, $w$, $\calT$)}
  \label{alg:asy:lgnapprox}
\end{algorithm}

\begin{figure}[ht!]
    \centering

    \begin{subfigure}{\textwidth}
        \centering
        \includegraphics[width=\textwidth]{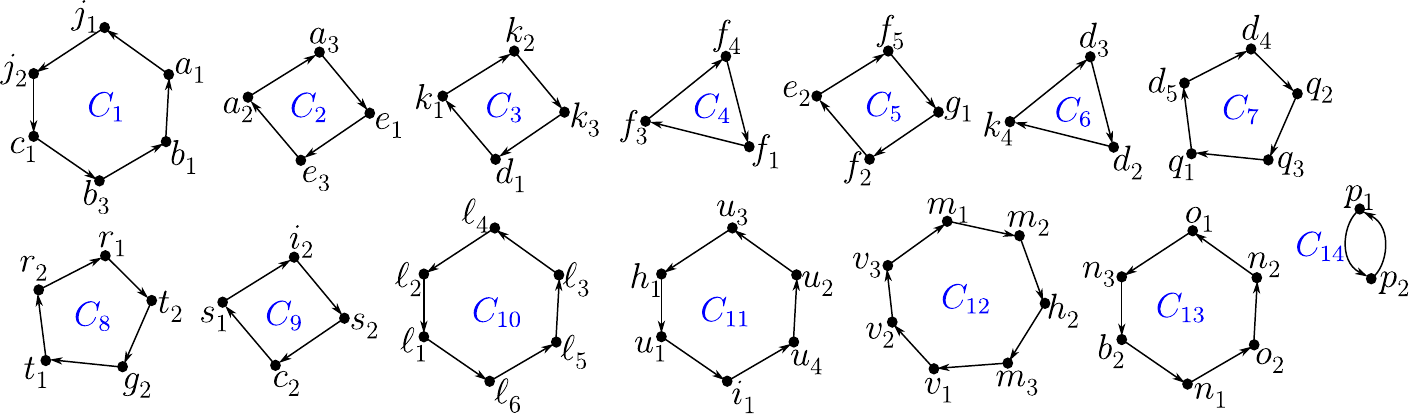}
        \caption{Minimum directed 2-factor $\calC$ obtained at line~\ref{line:asy:C0}.}
        \label{fig:asy:two_factor}
    \end{subfigure}

    \begin{subfigure}[t]{0.4\textwidth}
        \centering
        \includegraphics[width=0.65\textwidth]{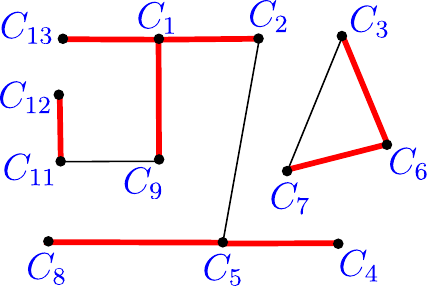}
        \caption{Edge cover obtained from $\calC$ for 
        $R=\{a_1,a_3,b_1,b_2,c_1,c_2,d_1,d_2,d_4,f_1,f_2,g_1,g_2,h_1,h_2\}$.}
        \label{fig:asy:edge_cover}
    \end{subfigure}
    \hfill
    \begin{subfigure}[t]{0.54\textwidth}
        \centering
        \includegraphics[width=\textwidth]{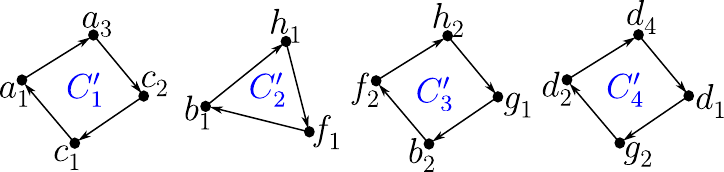}
        \caption{Minimum directed 2-factor $\calC'$ restricted to $D'$.}
        \label{fig:asy:two_factor_restricted}
    \end{subfigure}

    \begin{subfigure}{\textwidth}
        \centering
        \includegraphics[width=\textwidth]{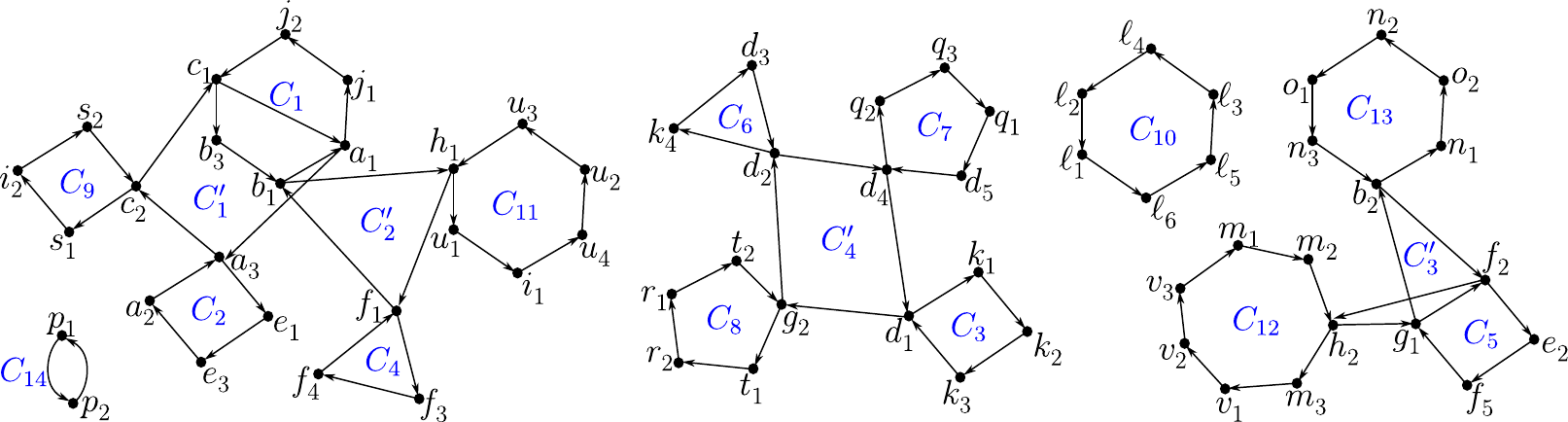}
        \caption{Strongly Eulerian digraph $D''$.}
        \label{fig:asy:strongly_eulerian}
    \end{subfigure}

    \begin{subfigure}{\textwidth}
        \centering
        \includegraphics[width=\textwidth]{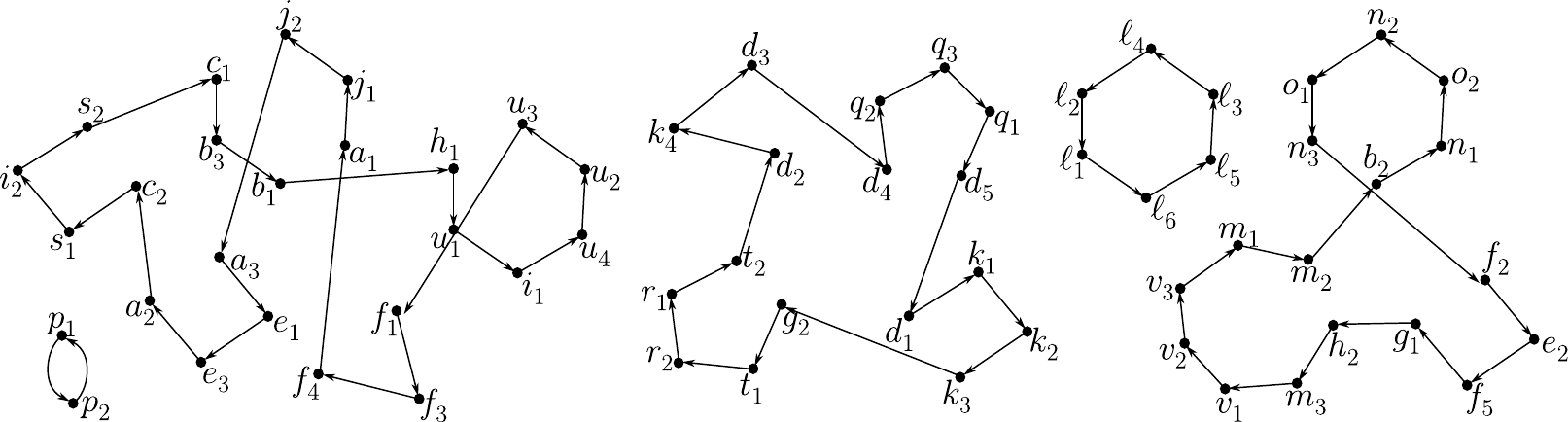}
        \caption{New 2-factor obtained from shortcutting an Eulerian tour in~$D''$.}
        \label{fig:asy:new_two_factorn}
    \end{subfigure}

    \caption{Auxiliar digraphs and structures built by Algorithm~\ref{alg:asy:lgnapprox}.}
    \label{fig:asy:lgnapprox}
\end{figure}

The way of choosing representatives in this algorithm is more complex than the
way used in~\cite{Frieze1982}.
This is because deriving an upper bound on the weight of~$\calC'$
in terms of an optimal 2-factor is more challenging than in terms of a minimum
weight TSP tour.
Specifically, a TSP tour can be shortcut into a 2-factor for any set~$R$ of
representatives.
However, this might not be the case for an optimal 2-factor.
Indeed, in~\cite{Frieze1982}, only one representative vertex is (arbitrarily)
chosen from each cycle.
However, in the example in Figure~\ref{fig:asy:lgnapprox}, suppose~$a_1$,
$a_2$, and~$a_3$ are alone in a cycle in any optimal 2-factor, and vertex~$a_1$
was chosen as the representative of~$C_1$, while vertex $e_1$ is chosen as the
representative of~$C_2$ (hence~$a_2$ and~$a_3$, which are also in~$C_2$, would
not be representatives). 
In this case, no shortcut of any optimal 2-factor would result in a 2-factor on
the chosen representatives: $a_1$ would be isolated in a shortcut of any optimal
2-factor on the chosen representatives.
This means we cannot guarantee that the optimal cost is an upper bound on the
minimum weight $w(\calC')$ of a 2-factor on the representatives. 
So we needed to develop a way to guarantee that the shortcut of an optimal
2-factor on~$R$ is a 2-factor, keeping the property that~$\calC'$ joins a good
amount of cycles of~$\calC$ that do not respect~$\calT$.
In the other extreme, one could consider including in~$R$ all vertices in
unhappy terminal sets because then the shortcut on~$R$ of any optimal solution
would be a 2-factor.
But this 2-factor might not join unhappy terminal sets: indeed, all terminal
sets might be unhappy in~$\calC$, and in this case~$R$ would be the whole set
of vertices and $\calC'=\calC$, leading the algorithm to loop forever.

\begin{theorem}
  Algorithm~\ref{alg:asy:lgnapprox} is an $\Oh(\lg n)$-approximation for the
  \textsc{Asymmetric} \steinercycle\ problem, where~$n$ is the number of
  vertices in the given digraph. 
\end{theorem}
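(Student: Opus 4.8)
The plan is to bound the total weight of the output $\calC$ and to bound the number of iterations of the main loop, and then to multiply these two quantities. Concretely, I would establish two lemmas. First, I would show that at every iteration the minimum weight directed 2-factor $\calC'$ computed in line~\ref{line:asy:2factor} satisfies $w(\calC') \leq \opt(D,w,\calT)$. The reason this is nontrivial (and where the careful choice of representatives pays off) is exactly the point raised in the discussion before the theorem: we need the shortcut of \emph{an optimal} 2-factor on the representative set $R$ to itself be a valid 2-factor on $R$. Properties (i) and (ii) of $R$ are designed precisely for this. Since $|R \cap T| \neq 1$ for every terminal set $T$, and an optimal solution keeps each terminal set inside one cycle, no representative gets left isolated when we shortcut an optimal 2-factor to the vertices of $R$; together with the triangle inequality, shortcutting does not increase the weight, so the shortcut is a feasible 2-factor on $R$ of weight at most $\opt(D,w,\calT)$. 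Hence the \emph{minimum} weight directed 2-factor $\calC'$ on $D'$ has $w(\calC') \leq \opt(D,w,\calT)$.

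Second, I would bound the number of iterations by $\Oh(\lg n)$. The quantity to track is $\eta^{}_\calT(\calC)$, the number of cycles of $\calC$ not respecting $\calT$. I would argue that each iteration at least halves this quantity. This is where property~\eqref{property:asy:lonely} is used: at least $\eta^{}_\calT(\calC)/2$ of the offending cycles are \emph{lonely}, i.e., contribute exactly one representative to $R$. When we merge $\calC$ with $\calC'$ in line~\ref{line:asy:Eulerian-digraph} and shortcut via \textsc{DirectedShortcut} (which, as stated earlier in the excerpt, preserves connected components and does not increase weight under the triangle inequality), every edge of the minimal edge cover $M$ forces two previously-distinct offending cycles into the same component of the merged digraph $D''$. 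Since each lonely cycle is matched by $M$ to another offending cycle, each such merge reduces the count of components among the offending cycles; a standard counting argument then gives that $\eta^{}_\calT$ drops to at most half its previous value, so after $\Oh(\lg n)$ iterations it reaches $0$ and the loop terminates with a 2-factor respecting $\calT$.

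Combining the two lemmas, I would sum the contributions: the initial 2-factor from line~\ref{line:asy:C0} has weight at most $\opt(D,w,\calT)$ (any feasible solution is a directed 2-factor, so a minimum one is a lower bound), and each of the $\Oh(\lg n)$ iterations adds a $\calC'$ of weight at most $\opt(D,w,\calT)$, while the \textsc{DirectedShortcut} steps never increase weight. Therefore the accumulated $\calC$ has total weight at most $\Oh(\lg n)\cdot \opt(D,w,\calT)$, giving the claimed ratio. I would also verify feasibility at termination: the loop exits only when $\eta^{}_\calT(\calC)=0$, meaning every cycle respects $\calT$, and since $\calC$ remains a directed 2-factor throughout (both \textsc{MinimumDirected2Factor} and \textsc{DirectedShortcut} maintain this invariant), the output is a valid directed 2-factor respecting $\calT$.

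The main obstacle I anticipate is the first lemma, specifically proving rigorously that the shortcut of an optimal 2-factor onto $R$ is genuinely a 2-factor on $R$. One must check that no representative vertex becomes isolated and that shortcutting within a single optimal cycle (which may pass through many non-representative vertices) produces well-defined directed cycles on the surviving representatives. The two representatives contributed by each matched pair, both lying in a common terminal set and hence in a common optimal cycle, are exactly what prevents a single leftover representative; I expect the crux of the argument to be translating property~(ii) into the statement that, restricted to any optimal cycle, the number of surviving representatives is never exactly one, so the shortcut is a disjoint union of genuine directed cycles.
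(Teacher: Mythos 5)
Your overall architecture is the same as the paper's: bound each $w(\calC')$ from line~\ref{line:asy:2factor} by $\opt(D,w,\calT)$, bound the number of iterations of the loop by $\Oh(\lg n)$, and sum. Your first lemma is also proved exactly as in the paper: since an optimal 2-factor $\calC^*$ respects $\calT$, each of its cycles is a disjoint union of terminal sets, so property~(ii) of $R$ gives that every cycle of $\calC^*$ contains either zero or at least two representatives; hence the shortcut $\calC^*_R$ is a genuine directed 2-factor on $D'$ and, by the triangle inequality, $w(\calC') \leq w(\calC^*_R) \leq \opt(D,w,\calT)$. Your feasibility check and final summation are likewise fine.

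The gap is in your iteration bound. You assert that ``every edge of the minimal edge cover $M$ forces two previously-distinct offending cycles into the same component of $D''$,'' and deduce halving of $\eta^{}_\calT$. This mechanism is wrong: $M$ is used only to \emph{select} the representatives; which cycles of $\calC$ actually get merged is governed by $\calC'$, the minimum-weight 2-factor on $R$, and $\calC'$ is under no obligation to place the two representatives $r^{}_C, r^{}_{C'}$ of a matched pair $\{C,C'\}\in M$ on a common cycle. For example, $\calC'$ could consist entirely of length-2 directed cycles, each joining either the two representatives of one non-lonely offending cycle (merging nothing) or the representatives of two lonely cycles not matched to each other by $M$. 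The paper's argument tracks lonely cycles instead of edges of $M$: the single representative of a lonely cycle $C$ lies on a cycle $C'$ of $\calC'$ with $|C'|\geq 2$, every vertex of which is a representative, so $C$ is merged with at least one other offending cycle of $\calC$; charging the decrease $\Delta = \eta^{}_\calT(\calC)-\eta^{}_\calT(\hat\calC)$ to lonely cycles gives each a contribution of at least $(|C'|-1)/|C'| \geq 1/2$, and with property~\eqref{property:asy:lonely} this yields $\eta^{}_\calT(\hat\calC) \leq \tfrac34\,\eta^{}_\calT(\calC)$. Note that $3/4$, not $1/2$, is what this argument guarantees (the scenario above shows why); a $3/4$ factor per iteration still gives $\Oh(\lg n)$ iterations, so your final bound survives, but only after the merging mechanism is replaced by the lonely-cycle charging argument.
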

\begin{proof} 
  Let $(D,w,\calT)$ be an instance of the \textsc{Asymmetric} \steinercycle,
  where~$D$ has~$n$ vertices. 
  We first show that the solution produced by
  Algorithm~\ref{alg:asy:lgnapprox} is indeed feasible for $(D,w,\calT)$. 
  It follows from its construction that the digraph $D''$ (computed at
  line~\ref{line:asy:Eulerian-digraph}) is a strongly Euclidean spanning subdigraph
  of~$D$, and so~$\calC$ is indeed a directed 2-factor in~$D$ with the same
  components as~$D''$.
  By the condition in line~\ref{line:asy:main-loop}, the set~$\calC$ returned by
  Algorithm~\ref{alg:asy:lgnapprox} respects~$\calT$, and thus~$\calC$ is a
  valid solution for $(D,w,\calT)$. 

  We now prove that the weight of each minimum weight 2-factor computed at
  line~\ref{line:asy:2factor} is upper bounded by the weight of an optimal solution
  for $(D,w,\calT)$. 
  Consider the complete digraph~$D'$ and the weight function~$w'$ used in
  line~\ref{line:asy:2factor}, and let~$\calC'$ be the minimum weight directed
  2-factor of~$D'$ obtained in line~\ref{line:asy:2factor}.
  Consider an optimal 2-factor~$\calC^*$ for the instance $(D,w,\calT)$,
  that is, a mininum weight 2-factor in $(D,w)$ that respects~$\calT$. 
  Now consider a shortcutting on~$\calC^*$ to go only through the
  vertices of~$R$, say $\calC^*_R$.
  Note that $\calC^*_R$ is certainly a 2-factor in $D'$ 
  because no cycle in $\calC^*_R$ has only one terminal in $R$.
  Also, $\calC^*_R$ has weight at most $\opt(D,w,\calT)$, 
  since~$w$ satisfies the triangular inequalities.
  As~$\calC'$ is a minimum 2-factor in $D'$, we have that 
  $w(\calC') \leq w(\calC^*_R)$, leading to $w(\calC') \leq \opt(D,w,\calT)$.

  Consider an iteration of the while loop in line~\ref{line:asy:main-loop}.
  Let $\calC$ be the directed 2-factor at the beginning of this iteration, $R$
  be the set from line~\ref{line:asy:R}, and $\hat \calC$ be the directed
  2-factor obtained in line~\ref{line:asy:shortcut}.
  The following assertion holds.

  \begin{claim2}
      $\eta^{}_\calT(\hat \calC) \le \frac{3}{4}\eta^{}_\calT(\calC)$.
  \end{claim2}
  \begin{claimproof}
    It suffices to argue that the difference $\Delta:= \eta^{}_\calT(\calC) -
    \eta^{}_\calT(\hat \calC)$ is at least half the number of lonely cycles
    in~$\calC$. 
    Let~$C$ be a lonely cycle in~$\calC$ and let~$C'$ be the cycle in~$\calC'$
    containing the single representative in $R \cap V(C)$. 
    If $C'$ contains a representative of a cycle that is not a lonely cycle
    in~$\calC$, then~$C$ contributes with~1 to~$\Delta$.
    Otherwise, every vertex in $C'$ is a representative of a lonely cycle
    in~$\calC$, and so~$C$ contributes with $(|C'|-1)/|C'|$ to~$\Delta$.
    As $|C'|\ge 2$, we conclude that every lonely cycle in~$\calC$ contributes
    with at least~$1/2$ to~$\Delta$.
    Because there are at least $\eta^{}_\calT(\calC)/2$ lonely cycles
    in~$\calC$ by property~\eqref{property:asy:lonely} of~$R$,
    we have $\eta^{}_\calT(\calC) - \eta^{}_\calT(\hat \calC) \geq
    \frac{1}{4}\eta^{}_\calT(\calC)$, which implies $\eta^{}_\calT(\hat \calC) \leq
    \eta^{}_\calT(\calC) - \frac{1}{4}\eta^{}_\calT(\calC) =
    \frac{3}{4}\eta^{}_\calT(\calC)$.
  \end{claimproof}

  As $\eta^{}_\calT(\calC_0) \leq n$ for the initial directed 2-factor~$\calC_0$
  from line~\ref{line:asy:C0}, it follows from the previous claim that the maximum
  number of iterations of the while loop in line~\ref{line:asy:main-loop} is
  $\Oh(\lg n)$.
  This implies that the weight of the solution produced by
  Algorithm~\ref{alg:asy:lgnapprox} is at most $\Oh(\lg n)\,\opt(D,w,\calT)$.
\hfill$\qed$
\end{proof}

\section{Final remarks}
\label{sec:final}

When there is only one terminal set, the \steinercycle\ turns into the TSP.
There is a $\frac32$-approximation for the metric TSP, so the first natural
question is whether there is also a $\frac32$-approximation for the metric
\steinercycle, or at least some approximation with a ratio better than~3. 

The difficulty in the Steiner forest is also a major difficulty in the
\steinercycle\ problem: how to find out what is the right way to cluster the
terminal sets. 
Indeed, if the number~$k$ of terminal sets is bounded by a constant, then one
can use brute force to guess the way an optimal solution clusters the terminal
sets, and then, in the case of the \steinercycle, apply any approximation for
the TSP to each instance induced by one of the clusters.  This leads to a
$\frac32$-approximation for any metric instance with bounded number of terminal
sets.
It also leads to better approximations for hereditary classes of instances for
which there are better approximations for the TSP. 

It would be nice to find out whether or not the cost of a minimum weight
perfect matching on the set of odd vertices of a minimum weight Steiner forest
is at most the optimum value for the \steinercycle. 

Observe that, for the $\{1,2\}$-\steinercycle, we can achieve the same
approximation ratio than the modified algorithm for the $\{1,2\}$-TSP, but for
the more general metric case, our ratio is twice the best ratio for the metric
TSP.
This comes from the fact that the backbone structure used in the solution for
the metric TSP (the MST and the minimum weight 2-factor) can be computed in
polynomial time.
For the $\{1,2\}$-\steinercycle\ we can still use the 2-factor, but the two
adaptations of the MST for the metric \steinercycle\ (the Steiner forest and
the survivable network design) are hard problems, for which we only have
2-approximations, not exact algorithms. 

In fact, for the $\{1,2\}$-TSP, better approximation algorithms are known:
there is an $\frac87$-approximation by Berman and Karpinski~\cite{BermanK2006},
and a $\frac76$-approximation and a faster $\frac87$-approximation by Adamaszek
et al.~\cite{AdamaszekMP2018}. 
The latter algorithms rely on some tools that we were not able to extend to the
$\{1,2\}$-\steinercycle.
On the other hand, the $\frac87$-approximation due to Berman and Karpinski
seems to be more amenable to an adaptation.

Recently, constant-factor approximations were presented for the asymmetric
TSP~\cite{SvenssonTV2020,TraubV2022}.  
Thus a natural direction for further research is to design constant-factor
approximation algorithms also for the Asymmetric \steinercycle{}.

\section*{Acknowledgements}
C. G. Fernandes was partially supported by the National Council for Scientific
and Technological Development -- CNPq (Proc.~310979/2020-0 and~423833/2018-9).
C. N. Lintzmayer was partially supported by CNPq (Proc.~312026/2021-8).
P. F. S. Moura was partially supported by the Fundação de Amparo à
Pesquisa do Estado de Minas Gerais -- FAPEMIG (APQ-01040-21).
This study was financed in part by the Coordenação de Aperfeiçoamento de
Pessoal de Nível Superior - Brasil (CAPES) - Finance Code 001, 
and by Grant \#2019/13364-7, São Paulo Research Foundation (FAPESP).

\section*{Competing interests}
The authors declare that they have no known competing financial interests or
personal relationships that could have appeared to influence the work reported
in this paper.

\section*{CRediT authorship contribution statement}
Cristina G. Fernandes: Conceptualization, Methodology, Validation, Writing.

\noindent Carla N. Lintzmayer: Conceptualization, Methodology, Validation, Writing.

\noindent Phablo F. S. Moura: Conceptualization, Methodology, Validation, Writing.

\bibliographystyle{splncs04}
\bibliography{bibfile}

\appendix

\section{Minimum-weight triangle-free 2-factor}\label{appendix}

Hartvigsen, in his PhD thesis~\cite[Section~3, Chapter~3]{Hartvigsen1984},
described an algorithm that finds, in a given graph, a triangle-free simple
2-matching with the maximum number of edges.
In this appendix, we detail how to use his algorithm to find a minimum-weight
triangle-free 2-factor in a complete graph with all edge weights~1 or~2. 
Let us start by clarifying the notation involved, as it is used differently
throughout the literature.

Let~$H$ be a graph (not necessarily complete, and without weights). 
A subgraph of~$H$ whose maximum degree is~2 is sometimes called a
\emph{2-matching}, and it differs from a 2-factor as it allows for degree-1 and
degree-0 vertices. 
That is, a 2-matching is a collection of vertex-disjoint paths and cycles
in~$H$.

Sometimes, in the literature, a 2-matching is used to refer to a weight
function that assigns weight 0, 1, or 2 to each edge of a simple graph~$H$ so
that the sum of the weights of the edges incident to each vertex is at most~2. 
An edge that is assigned a weight of~2 works essentially as a length-2 cycle.
For this reason, sometimes in the literature, the 2-matching as we defined is
referred to as a \emph{simple} 2-matching (as it does not allow for these
parallel edges).
Also, a 2-factor is sometimes called a perfect simple 2-matching. 
Indeed, a simple 2-matching~$F$ is \emph{perfect} if every vertex is incident
to exactly two edges from~$F$. 

There are polynomial-time algorithms that find a minimum-weight 2-factor in a
complete graph with arbitrary edge weights.
Such an algorithm can be used to find a simple 2-matching in a given graph~$H$
with the maximum number of edges: just consider the edges of~$H$ as having
weight~1, and the non-edges as having weight~2, and throw away the weight-2
edges of the obtained 2-factor.

On the other hand, as far as we know, no polynomial-time algorithm is known to
find a minimum-weight \emph{triangle-free} 2-factor in a complete graph with
arbitrary edge weights. 
Indeed, Hartvigsen and Li~\cite{HartvigsenL2013} explicitly mention this as an
open problem.

There are some statements in the
literature~\cite{AdamaszekMP2018,1993-papadimitriou-yannakakis}, when
discussing the $\frac76$-approximation for TSP, that might lead one to think
that Hartvigsen's algorithm for finding a maximum-size triangle-free simple
2-matching could be used to find a minimum-weight triangle-free 2-factor for
general weights.
But that does not seem to be the case.
What is true, and stated explicitly in~\cite{HartvigsenL2013}, is that
Hartvigsen's algorithm can be used to find a \emph{maximum}-weight
triangle-free 2-factor in a complete graph~$G$ with edge weights~0 and~1.
For completeness, we detail how this can be achieved. 

One can apply the original algorithm of Hartvigsen~\cite[Section~3,
Chapter~3]{Hartvigsen1984} on the graph~$H$ obtained from~$G$ by removing all
edges of weight~0.
Hartvigsen's algorithm returns a triangle-free collection~$\calC$ of cycles and
paths in~$H$, and by joining the paths in~$\calC$ into a single cycle, using
edges of weight~0 (possibly an artificially and momentarily added loop or
parallel edge), we obtain a 2-factor whose weight is the number of edges
in~$\calC$.
If all cycles in this 2-factor have length at least~4, then we are done.
If not, then the only cycle~$C$ of length at most~3 is the cycle obtained from
joining the paths in~$\calC$.
If there is an appropriate weight-1 edge in~$G$ connecting~$C$ to one of the
other cycles in the 2-factor, then we can exchange a weight-1 and a weight-0
edge in the current 2-factor for this weight-1 edge and another weight-0 edge,
to obtain a triangle-free 2-factor in~$G$ with the same weight.  
If no appropriate weight-1 edge exists connecting~$C$ to the rest of the graph,
then we can do a similar exchange, but replacing a weight-1 and a weight-0 edge
with two weight-0 edges.
By a case analysis, one can verify that this leads to a maximum-weight
triangle-free 2-factor in~$G$.

In Corollary~\ref{cor:onetwo:76approx}, the given graph $G$ has edge weights 1 or~2, 
and we want to find a \emph{minimum}-weight triangle-free 2-factor in~$G$.
That can be solved similarly using Hartvigsen's algorithm on the graph 
with only the weight-1 edges. 

\end{document}